\theoremstyle{plain}
\newtheorem{lem}{Lemma}
\newtheorem{prob}{Problem}
\newtheorem{rmk}{Remark}
\newtheorem{prop}{Proposition}
\def\Rset{\mathbb{R}}
\def\Nset{\mathbb{N}}
\newcommand{\FF}{{\mathcal F}}
\newcommand{\HH}{{\mathcal H}}
\newcommand{\MM}{{\mathcal M}}
\newcommand{\NN}{{\mathcal N}}
\newcommand{\PP}{{\mathcal P}}
\newcommand{\SSS}{{\mathcal S}}
\newcommand{\mbf}[1]{\mathbf{#1}}
\newcommand{\px}{{x^+}}
\newcommand{\subss}[2]{#1_{[#2]}}
\newcommand{\tx}{{\tilde x}}
\newcommand{\diag}{\mbox{diag}}
\newcommand{\rank}{\mbox{rank}}
\newcommand{\Xset}{\mathbb{X}}
\newcommand{\Wset}{\mathbb{{{W}}}}
\newcommand{\Vset}{\mathbb{V}}
\newcommand{\ball}[1]{{B_{#1}}}
\newcommand{\abs}[1]{{|{#1}|}}
\newcommand{\norme}[2]{{||{#1}||_{#2}}}
\newcommand{\One}{\textbf{1}}
\newcommand{\Zero}{\textbf{0}}
\newcommand{\Metz}{\mathds{M}}
\newcommand{\eye}[1]{\mathds{I}_#1}
\newcommand{\tpx}{{\tilde{x}}^+}
\newcommand{\pe}{{e}^+}
\newcommand{\Eset}{\mathbb{E}}
\newcommand{\Sset}{\mathbb{S}}
\newcommand{\oneblock}[1]{
\begin{aligned}
#1
\end{aligned}
}
\newcommand{\imply}{\Rightarrow}
\newcommand{\ba}[1]{\begin{array}{#1}}
\newcommand{\ea}{\end{array}}
\newcommand{\matr}[1]{
\begin{bmatrix}
    #1
\end{bmatrix}
}
\begin{document}

      \title{Plug-and-play distributed state estimation for linear systems\thanks{The research leading to these results has received funding from the European Union Seventh Framework Programme [FP7/2007-2013]  under grant agreement n$^\circ$ 257462 HYCON2 Network of excellence.}}
     \author{Stefano Riverso%
       \thanks{Electronic address: \texttt{stefano.riverso@unipv.it}; Corresponding author}} 
     
     \author{Marcello Farina%
       \thanks{Electronic address: \texttt{farina@elet.polimi.it}}} 

     \author{Riccardo Scattolini%
       \thanks{Electronic address: \texttt{riccardo.scattolini@elet.polimi.it}}} 

     \author{Giancarlo Ferrari-Trecate%
       \thanks{Electronic address: \texttt{giancarlo.ferrari@unipv.it}\\S. Riverso and G. Ferrari-Trecate are with Dipartimento di Ingegneria Industriale e dell'Informazione, Universit\`a degli Studi di Pavia, via Ferrata 1, 27100 Pavia, Italy\\M. Farina and R. Scattolini are with Dipartimento di Elettronica e Informazione, Politecnico di Milano, via Ponzio 34/5, 20133 Milan, Italy}} 

     \affil{Dipartimento di Ingegneria Industriale e dell'Informazione\\Universit\`a degli Studi di Pavia}
     \date{\textbf{Technical Report}\\ September, 2013}

     \maketitle

     \begin{abstract}
       This paper proposes a state estimator for large-scale linear systems described by the interaction of state-coupled subsystems affected by bounded disturbances. We equip each subsystem with a Local State Estimator (LSE) for the reconstruction of the subsystem states using pieces of information from parent subsystems only. Moreover we provide conditions guaranteeing that the estimation errors are confined into prescribed polyhedral sets and converge to zero in absence of disturbances. Quite remarkably, the design of an LSE is recast into an optimization problem that requires data from the corresponding subsystem and its parents only. This allows one to synthesize LSEs in a Plug-and-Play (PnP) fashion, i.e. when a subsystem gets added, the update of the whole estimator requires at most the design of an LSE for the subsystem and its parents. Theoretical results are backed up by numerical experiments on a mechanical system.
     \end{abstract}

     \newpage

     \section{Introduction}
          In several applications, the use of centralized state estimators is hampered by the complexity of the underlying systems. As an example, when plants are composed by several subsystems arranged in a parent-child coupling relation, online operations, such as the transmission of output samples to a central processing unit or the simultaneous estimation of all states, can be prohibitive. This has motivated a large body of research on Distributed State Estimators (DSEs) where subsystems are equipped with LSEs connected through a communication network and dedicated to the reconstruction of local states only \cite{Mutambara1998,Vadigepalli2003,Khan2008,Stankovic2009,Stankovic2009a,Farina2010,Farina2011b,Riverso2013b}. Concerning the required communication links, some methods are more parsimonious as they do not need information to be exchanged between all LSEs, but only along the edges of a directed network with the parent-child topology induced by subsystems coupling \cite{Khan2008,Stankovic2009,Stankovic2009a,Farina2010,Farina2011b,Riverso2013b}. Furthermore, there are methods that also guarantee the fulfillment of constraints on local states \cite{Farina2010} or estimation errors \cite{Farina2011b,Riverso2013b}.

          As in  \cite{Farina2011b} and \cite{Riverso2013b}, in this paper we consider discrete-time linear time-invariant subsystems affected by bounded disturbances and propose a DSE composed by LSEs with a Luenberger-like structure and connected through a network with parent-child topology. We provide conditions for guaranteeing estimation errors fulfill prescribed polyhedral constraints at all times and converge to zero when there are no disturbances. A key feature of our approach is that, differently from \cite{Farina2011b} and \cite{Riverso2013b}, checking these conditions amounts to numerical tests that are associated with individual LSEs and that can be conducted in parallel using hardware collocated with subsystems. Furthermore, each test requires data from parent subsystems only. These properties enable PnP design of LSEs, meaning that (i) when a subsystem is added to a plant, the corresponding LSE can be designed using pieces of information from parent subsystems only; (ii) in order to preserve the key
properties of the whole DSE, the plugging in and out of a subsystem triggers at most the update of LSEs associated to child subsystems and (iii) the design/update of an LSE is automatized, e.g. it is recast into an optimization problem that can be solved using local hardware. We highlight that addition and removal of subsystems, as well as synthesis of LSEs, are here considered as offline operations and therefore no hybrid dynamics is generated. Our method, that parallels the PnP procedure for the design of decentralized model predictive controllers proposed in \cite{Riverso2013c} and \cite{Riverso2012h}, can be useful in the context of systems of systems \cite{Samad2011} and cyber-physical systems \cite{Antsaklis2013} where, typically, the number of subsystems changes over time.

          The paper is structured as follows. The DSE is introduced in Section \ref{sec:distrstateesit}. In Section \ref{sec:dec_design}, the main results allowing design decentralization are presented together with the optimization-based synthesis of LSEs. PnP operations are discussed in \ref{sec:plugplay}. In Section \ref{sec:example} we illustrate the use of the DSE for reconstructing the states of a 2D array of masses connected by springs and dampers. Finally, Section \ref{sec:conclusions} is devoted to conclusions.

          \textbf{Notation.} We use $a:b$ for the set of integers $\{a,a+1,\ldots,b\}$. The symbol $\Rset_+^n$ stands for the vectors in $\Rset^n$ with nonnegative elements. The column vector with $s$ components $v_1,\dots,v_s$ is $\mbf v=(v_1,\dots,v_s)$. The symbol $\oplus$ denotes the Minkowski sum, i.e. $A=B\oplus C$ if and only if $A=\{a:a=b+c,~b\in B, ~c\in C\}$. Moreover, $\bigoplus_{i=1}^sG_i=G_1\oplus\ldots\oplus G_s$. The symbol $\One_\alpha$ (resp. $\Zero_\alpha$) denotes a column vector with $\alpha\in\Nset$ elements all equal to $1$ (resp. $0$). Given a matrix $A\in\Rset^{n\times n}$, with entries $a_{ij}$ its entry-wise 1-norm is  $\norme{A}{1}=\sum_{i=1}^n\sum_{j=1}^n\abs{a_{ij}}$ and its Frobenius norm is $\norme{A}{F}=\sqrt{\sum_{i=1}^n\sum_{j=1}^na_{ij}^2}$. The standard Euclidean norm is denoted with $\| \cdot\|$. The pseudo-inverse of a matrix $A\in\Rset^{m\times n}$ is denoted with $A^\flat$.\\
          The set $\Xset\subseteq\Rset^n$ is positively invariant \cite{Rawlings2009} for $x(t+1)=f(x(t))$, if $x(t)\in\Xset\imply f(x(t))\in\Xset$.\\
          The set $\Xset\subseteq\Rset^n$ is Robust Positively Invariant (RPI) \cite{Rawlings2009} for $x(t+1)=f(x(t),w(t))$, $w(t)\in\Wset\subseteq\Rset^m$ if $x(t)\in\Xset\imply f(x(t),w(t))\in\Xset\mbox{, }\forall w(t)\in\Wset$.  The RPI set $\bar\Xset$ is maximal (MRPI) if every other RPI $\Xset$ verifies $\bar{\Xset}\supseteq\Xset$. The RPI set $\underline\Xset$ is minimal (mRPI) if every other RPI $\Xset$ verifies $\underline{\Xset}\subseteq\Xset$. The RPI set $\Xset(\epsilon)$ is a $\epsilon$-outer approximation of the mRPI $\underline\Xset$ if
          \begin{equation*}
            \label{eq:defapproxmRPI}
            x\in\Xset(\epsilon)\imply\exists~\underline x\in\underline\Xset\mbox{ and } \tx\in\ball{\epsilon}(0): x = \underline x + \tx
          \end{equation*}
          where, for $\epsilon>0$,  $\ball{\epsilon}(v)=\{x\in\Rset^n|\|x-v\|<\epsilon\}$.

     \section{Distributed state estimator}
          \label{sec:distrstateesit}
          We consider a discrete-time Linear Time Invariant (LTI) system
          \begin{equation}
            \label{eq:model}
            \begin{aligned}
              \mbf\px &= \mbf{Ax+Bu+Dw}\\
              \mbf y &= \mbf{Cx}
            \end{aligned}
          \end{equation}
          where $\mbf x\in\Rset^n$, $\mbf u\in\Rset^m$, $\mbf y\in\Rset^p$ and $\mbf w\in\Rset^r$ are the state, the input, the output and the disturbance, respectively, at time $t$ and $\mbf\px$ stands for $\mbf x$ at time $t+1$. The state is composed by $M$ state vectors $\subss x i\in\Rset^{n_i}$, $i\in\MM=1:M$ such that $\mbf x=(\subss x 1,\ldots,\subss x M)$, and $n=\sum_{i\in\MM}n_i$. Similarly, the input, the output and the disturbance are composed by $M$ vectors $\subss u i\in\Rset^{m_i}$,  $\subss y i\in\Rset^{p_i}$,  $\subss w i\in\Rset^{r_i}$, $i\in\MM$ such that $\mbf u=(\subss u 1,\ldots,\subss u M)$, $m=\sum_{i\in\MM}m_i$, $\mbf y=(\subss y 1,\ldots,\subss y M)$, $p=\sum_{i\in\MM}p_i$, $\mbf w=(\subss w 1,\ldots,\subss w M)$ and $r=\sum_{i\in\MM}r_i$.
          
          We assume \eqref{eq:model} can be equivalently described by subsystems $\subss \Sigma i$, $i\in\MM$, given by
          \begin{equation}
            \label{eq:subsystem}
            \begin{aligned}
              \subss\Sigma i:\quad\subss \px i&=A_{ii}\subss x i+B_i\subss u i+\sum_{j\in\NN_i}A_{ij}\subss x j+D_i\subss w i\\
              \subss y i&=C_{i}\subss x i
            \end{aligned}
          \end{equation}
          where $A_{ij}\in\Rset^{n_i\times n_j}$, $i,j\in\MM$, $B_i\in\Rset^{n_i\times m_i}$, $D_i\in\Rset^{n_i\times r_i}$, $C_i\in\Rset^{p_i\times n_i}$ and $\NN_i$ is the set of parents of subsystem $i$ defined as $\NN_i=\{j\in\MM:A_{ij}\neq 0, i\neq j\}$. Moreover, since $\subss y i$ depends on the local state $\subss x i$ only, subsystems $\subss\Sigma i$ are output-decoupled and then $\mbf{C}=\diag(C_1,\ldots,C_M)$. Similarly, subsystems $\subss\Sigma i$ are input- and disturbance-decoupled, i.e.  $\mbf{B}=\diag(B_1,\ldots,B_M)$ and $\mbf{D}=\diag(D_1,\ldots,D_M)$. We also assume
          \begin{equation}
            \label{eq:Wbound}
            \subss w i\in\Wset_i
          \end{equation}
          where the set $\Wset_i\subset\Rset^{r_i}$ is a zonotope centered at the origin, i.e.  a polytope that is centrally symmetric about the origin. Without loss of generality, $\Wset_i$ can be written as
          \begin{equation}
            \label{eq:setspolyW}
            \begin{aligned}
              \Wset_i 
              &= \{\subss{w}{i}\in\Rset^{r_i}|\FF_i\subss{w}{i}\leq \One_{\bar{\upsilon}_i} \} \\
              &=\{\subss{w}{i}\in\Rset^{r_i}|\subss w i=\Delta_i l_i,\mbox{ }\norme{l_i}{\infty}\leq 1 \}
            \end{aligned}
          \end{equation}
          where $\FF_i=(f_{i,1}^T,\ldots,f_{i,\bar \upsilon_i}^T)\in\Rset^{\bar{\upsilon}_i\times r_i}$, $\rank(\FF_i)=r_i$, $\Delta_i\in\Rset^{r_i\times \bar r_i}$ and $l_i\in\Rset^{\bar r_i}$.

          In this section we propose a Distributed State Estimator (DSE) for (\ref{eq:model}). As in \cite{Farina2011b} and \cite{Riverso2013b}, we define for $i\in\MM$ the Local State Estimator (LSE)
          \begin{equation}
            \label{eq:subestimator}
            \begin{aligned}
              \subss{\tilde\Sigma}i:\quad\subss \tpx i =A_{ii}\subss \tx i+B_i\subss u i-L_{ii}(\subss y i-C_i\subss\tx i)+\\\sum_{j\in\NN_i}A_{ij}\subss \tx j-\sum_{j\in\NN_i}\delta_{ij}L_{ij}(\subss y j-C_j\subss\tx j)\\
            \end{aligned}
          \end{equation}
          where $\subss\tx i\in\Rset^{n_i}$ is the state estimate, $L_{ij}\in\Rset^{n_i\times p_j}$ are gain matrices and $\delta_{ij}\in\{0,1\}$. This implies that $\subss{\tilde\Sigma} i$ depends only on local variables ($\subss\tx i$, $\subss u i$ and $\subss y i$) and parents' variables ($\subss\tx j$ and $\subss y j$, $j\in\NN_i$). Binary parameters $\delta_{ij}$, $j\in\NN_i$ can be chosen equal to one for exploiting the knowledge of parents' outputs, or equal to zero for reducing the number of transmitted output samples.

          Defining the state estimation error as
          \begin{equation}
            \label{eq:errori}
            \subss e i=\subss x i-\subss\tx i,
          \end{equation}
          from (\ref{eq:subsystem}), \eqref{eq:subestimator} and (\ref{eq:errori}), we obtain the local error dynamics
          \begin{equation}
            \label{eq:erroridyn}
            \subss \Theta i:\quad   \subss\pe i=\bar{A}_{ii}\subss e i+\sum_{j\in\NN_i}\bar{A}_{ij}\subss e j+D_i\subss w i \\
          \end{equation}
          where $\bar{A}_{ii}=A_{ii}+L_{ii}C_i$ and $\bar{A}_{ij}=A_{ij}+\delta_{ij}L_{ij}C_j$, $i\neq j$. Our main goal is to solve the following problem.
          \begin{prob}
            \label{prob:estimator_properties}
            Design in a decentralized fashion LSEs $\subss{\tilde\Sigma} i$, $i\in\MM$ that
            \begin{description}
            \item[(a)] are nominally convergent, i.e. when $\Wset_i=\{0\}$ it holds
              \begin{equation}
                \label{eq:conv}
                \norme{\subss e i(t)}{}\rightarrow 0 \mbox{ as  }t\rightarrow\infty
              \end{equation}
            \item[(b)] guarantee, for suitable initial conditions
              \begin{equation}
                \label{eq:bounderror}
                \subss e i(t)\in\Eset_i,~\forall t\geq 0
              \end{equation}
              where $\Eset_i\subseteq\Rset^{n_i}$ are zonotopes
              centered at the origin
              given by
              \begin{equation}
                \label{eq:setspolyE}
                \begin{aligned}
                  \Eset_i 
                                &= \{\subss{e}{i}\in\Rset^{n_i}|\HH_i\subss{e}{i}\leq \One_{\bar{\tau}_i} \} \\
                                &=\{\subss{e}{i}\in\Rset^{n_i}|\subss e i=\Xi_i d_i,\mbox{ }\norme{d_i}{\infty}\leq 1 \}
                \end{aligned}
              \end{equation}
              In \eqref{eq:setspolyE}, $\HH_i=(h_{i,1}^T,\ldots,h_{i,\bar\tau_i}^T)\in\Rset^{\bar{\tau}_i\times n_i}$, $\rank(\HH_i)=n_i$, $\Xi_i\in\Rset^{n_i\times \bar n_i}$ and $d_i\in\Rset^{\bar n_i}$. $\square$
            \end{description}
          \end{prob}
          Defining the variable $\mbf e=(\subss e 1,\ldots,\subss e M)\in\Rset^n$, from (\ref{eq:erroridyn}) one obtains the collective dynamics of the estimation error
          \begin{equation}
            \label{eq:errordyn}
            \begin{aligned}
              \mbf\pe&=\mbf{\bar{A}}\mbf e+\mbf D\mbf w\\
            \end{aligned}
          \end{equation}
          where the matrix $\mbf{\bar{A}}$ is composed by blocks $\bar{A}_{ij}$, $i,j\in\MM$.

          We equip system \eqref{eq:errordyn} with constraints $\mbf e\in\Eset=\prod_{i\in\MM}\Eset_i$ and $\mbf w\in\Wset=\prod_{i\in\MM}\Wset_i$.

          Let $\mbf L$ be the matrix composed by blocks $L_{ij}$, $i,j\in\MM$. From \eqref{eq:errordyn}, if $\mbf L$ is such that $\mbf{\bar A}$ is Schur, then property \eqref{eq:conv} holds. Moreover, if there exists an RPI set $\Sset\subseteq\Eset$ for the constrained system \eqref{eq:errordyn}, then $\mbf{e}(0)\in\Sset$ guarantees property \eqref{eq:bounderror}. We highlight that methods based on Linear Programming (LP) for computing $\Sset$ exist \cite{Rakovic2005a,Rakovic2010}. However the resulting LP problems require the knowledge of the collective model \eqref{eq:model} and therefore they become prohibitive for large-scale systems.


          In absence of coupling between subsystems (i.e. $A_{ij}=0$, $i\neq j$) the error dynamics \eqref{eq:erroridyn} are decoupled as well. Therefore, from \eqref{eq:errordyn}, if $L_{ii}$ are such that matrices $\bar A_{ii}$ are Schur, then  \eqref{eq:conv} holds. Furthermore, if there is an RPI set $\Sset_i\subseteq\Eset_i$ for each local error dynamics, property \eqref{eq:bounderror} can be guaranteed by requiring $\subss e i (0)\in\Sset_i$. Since $\Eset_i$ and $\Wset_i$ are polytopes, using the algorithms in \cite{Rakovic2005a,Rakovic2010} the computation of sets $\Sset_i$, $i\in\MM$ requires the solution of $M$ LP problems that can be solved in parallel using computational resources collocated with subsystems.

          In the next section we propose a method for bridging the gap between the two extreme cases described above, i.e. for designing LSEs in a decentralized fashion even in presence of coupling between subsystems.

     \section{Decentralization of LSE design}
          \label{sec:dec_design}
          In the following, we first solve Problem~\ref{prob:estimator_properties} in the case of $\Wset=\{0\}$ i.e. no disturbances act on subsystems \eqref{eq:model}, and then show how to take disturbances into account.

          When $\Wset=\{0\}$, we need to find matrices $L_{ij}$ $i,j\in\MM$ such that system \eqref{eq:errordyn} is asymptotically stable. To achieve this aim in a decentralized fashion, we treat the coupling term $\subss v i=\sum_{j\in\NN_i}\bar A_{ij}\subss e j$ as a disturbance for the error dynamics
          \begin{equation}
            \label{eq:erroridyntube}
            \subss\pe i=\bar{A}_{ii}\subss e i+\subss v i
          \end{equation}
          and then confine the error into an RPI set $\Sset_i\subseteq\Eset_i$ for
          \eqref{eq:erroridyntube}
          and $\subss v i\in\Vset_i=\bigoplus_{j\in\NN_i}\bar A_{ij}\Eset_j$. The main result, that will also enable PnP design of LSEs, is given in the next proposition.
          \begin{prop}
            \label{prop:main}
            Let $\Wset=\{0\}$. If, for given matrices $L_{ij}$ and parameters $\delta_{ij}$, $i,j\in\MM$, the following conditions are fulfilled
            \begin{subequations}
              \label{eq:pseudoinequalities}
              \begin{align}
                \label{eq:AiiSchur}&\bar A_{ii} \mbox{ is Schur, }\forall i\in\MM\\
                \label{eq:betapseudo}&\beta_i =\sum_{j\in\NN_i}\sum_{k=0}^{\infty}\norme{\HH_i\bar A_{ii}^k\bar A_{ij}\HH_j^\flat}{\infty}<1,~\forall i\in\MM
              \end{align}
            \end{subequations}
            then
            \begin{enumerate}[(I)]
            \item $\mbf{\bar A}$ is Schur;
            \item $\forall i\in\MM$ there exists an RPI $\Sset_i\subseteq\Eset_i$ for dynamics \eqref{eq:erroridyntube}, such that $\Sset=\prod_{i\in\MM}\Sset_i$ is a positively invariant set for system \eqref{eq:errordyn}.
            \end{enumerate}
          \end{prop}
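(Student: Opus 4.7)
The plan is to choose $\Sset_i$ as the minimal RPI set of the tube dynamics \eqref{eq:erroridyntube} driven by $\subss v i\in\Vset_i$, namely
$$\Sset_i\;=\;\bigoplus_{k=0}^{\infty}\bar A_{ii}^{\,k}\Vset_i.$$
This set is compact and RPI because \eqref{eq:AiiSchur} makes $\bar A_{ii}$ Schur and $\Vset_i$ is a bounded polytope, so the Minkowski series converges absolutely. Part (II) will be reduced to (a) $\Sset_i\subseteq\Eset_i$ and (b) invariance of $\Sset=\prod_i\Sset_i$ for the collective nominal dynamics $\mathbf{\pe}=\mathbf{\bar A}\mathbf e$, while part (I) will be obtained separately via a small-gain-type argument in the seminorms $\|\HH_i\cdot\|_\infty$.

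The crux of (a) is to insert the identity $\HH_j^{\flat}\HH_j=I_{n_j}$ (valid since $\rank(\HH_j)=n_j$) inside the Minkowski sum. Any $e\in\Sset_i$ admits a representation $e=\sum_{k}\bar A_{ii}^{k}\sum_{j\in\NN_i}\bar A_{ij}e_{j,k}$ with $e_{j,k}\in\Eset_j$, and central symmetry of the zonotope $\Eset_j$ together with $\HH_j e_{j,k}\le\One_{\bar\tau_j}$ gives $\|\HH_j e_{j,k}\|_\infty\le 1$. Rewriting $e_{j,k}=\HH_j^{\flat}(\HH_j e_{j,k})$, applying $\HH_i$, and using submultiplicativity of the induced $\ell_\infty$ matrix norm yields
$$\|\HH_i e\|_\infty\;\le\;\sum_{j\in\NN_i}\sum_{k=0}^{\infty}\|\HH_i\bar A_{ii}^{k}\bar A_{ij}\HH_j^{\flat}\|_\infty\,\|\HH_j e_{j,k}\|_\infty\;\le\;\beta_i\;<\;1,$$
so by central symmetry of $\Eset_i$ one obtains $\Sset_i\subseteq\beta_i\Eset_i\subseteq\Eset_i$. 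Invariance of $\Sset$ is then immediate: for $\mathbf e\in\Sset$ every coupling term $\subss v i=\sum_{j\in\NN_i}\bar A_{ij}\subss e j$ belongs to $\Vset_i$ because $\subss e j\in\Sset_j\subseteq\Eset_j$, and RPI of $\Sset_i$ delivers $\subss\pe i\in\Sset_i$. This proves (II).

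For (I), I would analyse $\phi_i(t):=\|\HH_i\subss e i(t)\|_\infty$ along trajectories of $\mathbf\pe=\mathbf{\bar A}\mathbf e$. Unrolling the recursion and again inserting $\HH_j^{\flat}\HH_j=I_{n_j}$ inside each coupling term gives
$$\phi_i(t)\;\le\;\|\HH_i\bar A_{ii}^{t}\subss e i(0)\|_\infty\;+\;\sum_{j\in\NN_i}\sum_{k=0}^{t-1}\|\HH_i\bar A_{ii}^{k}\bar A_{ij}\HH_j^{\flat}\|_\infty\,\phi_j(t-1-k).$$
Since $\bar A_{ii}$ is Schur the homogeneous term decays geometrically and the weights have finite sum at most $\beta_i$. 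A fixed-point comparison based on $\beta^{\ast}:=\max_i\beta_i<1$ first yields $\sup_t\phi_i(t)<\infty$, and then passing to $\limsup_t$ gives $\limsup_t\phi_i(t)\le\beta^{\ast}\max_j\limsup_t\phi_j(t)$, forcing $\phi_i(t)\to 0$. Because $\HH_i$ has full column rank this implies $\subss e i(t)\to 0$ for arbitrary $\mathbf e(0)$, so $\mathbf{\bar A}$ is Schur.

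The main obstacle I foresee is the bookkeeping behind (I): establishing \emph{a priori} boundedness of the convolution-type recursion for $\phi_i$ and then interchanging $\limsup_t$ with the infinite $k$-sum so that the contraction constant which appears is exactly $\beta^{\ast}$ rather than some looser upper bound coming from truncation. Step (II) is largely mechanical once the full column rank of $\HH_j$ and the central symmetry of $\Eset_j$ are exploited to linearise the containment test.
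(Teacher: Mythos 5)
Your proposal is correct, and the two halves sit differently relative to the paper. For part (II) you are essentially doing what the paper does: the paper also builds $\Sset_i$ from the Minkowski series $\bigoplus_k \bar A_{ii}^k\bigoplus_{j\in\NN_i}\bar A_{ij}\Eset_j$ and verifies $\Sset_i\subseteq\Eset_i$ by bounding the support functions $h_{i,\tau}^T s_i$ through exactly your insertion of $\HH_j^\flat\HH_j$ and submultiplicativity; the only difference is that the paper works with the Rakovi\'c $\epsilon$-outer approximation $\Sset_i(\epsilon_i)=(1-\epsilon_i)^{-1}\bigoplus_{k=0}^{s_i-1}\bar A_{ii}^k\bigoplus_j\bar A_{ij}\Eset_j$ (finitely determined, hence computable, which is why the strict inequality $\beta_i<1$ is spent on absorbing the $\epsilon_i$-inflation), whereas you use the exact mRPI and spend the slack on the containment $\Sset_i\subseteq\beta_i\Eset_i$. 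For part (I) your route is genuinely different in execution: the paper encodes the gains $\sum_k\|\HH_i\bar A_{ii}^k\bar A_{ij}\HH_j^\flat\|_\infty$ into a Metzler matrix $\Metz$, uses $\Metz\One_M<\Zero_M$ to get Schurness of the gain matrix $\Gamma=\Metz+\eye M$, invokes the small-gain theorem for networks of Dashkovskiy et al.\ on the expanded system $\tilde{\mbf e}=\tilde{\mbf\HH}\mbf e$, and then pulls stability back via the inclusion principle; you instead prove the same small-gain conclusion from scratch with a boundedness-then-$\limsup$ contraction on $\phi_i(t)=\|\HH_i\subss e i(t)\|_\infty$. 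Your argument is self-contained and avoids two external citations, at the price of the bookkeeping you flag — which does close: uniform boundedness follows from $M(T)\le \max\{M(0),\,A\,M(0)+\beta^\ast M(T)\}$ with $A=\sup_t\max_i\|\HH_i\bar A_{ii}^t\HH_i^\flat\|_\infty<\infty$, and the interchange of $\limsup_t$ with the $k$-sum is justified by truncating the convolution at a finite $K$ and using summability of the tail weights against the uniform bound, after which $L^\ast\le\beta^\ast L^\ast$ forces $L^\ast=0$ and full column rank of $\HH_i$ gives $\subss e i(t)\to 0$, hence Schurness of $\mbf{\bar A}$ by linearity. Note that by contracting to $\max_i\beta_i$ you get a slightly coarser scalar small-gain estimate than the paper's matrix condition $\Metz\One_M<\Zero_M$, but under the stated hypotheses the two coincide in strength.
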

          \begin{proof}
            The proof is given in the Appendix \ref{sec:proofmain}.
          \end{proof}
          Some comments are in order. The conditions in Proposition \ref{prop:main} guarantee that if $ \subss e i (0)\in\Sset_i$, $\forall i\in\MM$, then \eqref{eq:conv} and \eqref{eq:bounderror} hold. Condition \eqref{eq:betapseudo}, that stems from the small gain theorem for networks \cite{Dashkovskiy2007}, implies that the coupling between subsystems must be sufficiently small. In particular, if subsystems are decoupled, \eqref{eq:betapseudo} is always fulfilled and nominal convergence of the state estimator is guaranteed by condition \eqref{eq:AiiSchur} only.
          \begin{rmk}
            \label{rmk_trans_info}
            We highlight that, for a given $i\in\MM$, the quantity $\beta_i$ in \eqref{eq:pseudoinequalities} depends only upon local fixed parameters $\{A_{ii},C_{i},\HH_i\}$, neighbors' fixed parameters $\{A_{ij},C_j,\HH_j\}_{j\in\NN_i}$ and local tunable parameters $\{L_{ii},\{L_{ij},\delta_{ij}\}_{j\in\NN_i}\}$ but not on neighbors' tunable parameters. This implies that the choice of $\{L_{ii},\{L_{ij},\delta_{ij}\}_{j\in\NN_i}\}$ does not influence the choice of $\{L_{jj},\{L_{jk},\delta_{jk}\}_{k\in\NN_j}\}$, for $i\neq j$. \hfill{$\square$}
          \end{rmk}
          
          When system \eqref{eq:model} is affected by disturbances, i.e. $\Wset\neq\{0\}$, we can still use \eqref{eq:pseudoinequalities} for guaranteeing the stability of matrix $\mbf{\bar A}$, but we need an additional condition in order to guarantee the existence of an RPI set $\Sset_i\subseteq\Eset_i$ for the error dynamics
          \begin{equation}
            \label{eq:erroridyntubedist}
            \subss\pe i=\bar{A}_{ii}\subss e i+\subss {\tilde v} i
          \end{equation}
          where the disturbance $\subss{\tilde v} i$ verifies
          \begin{equation}
            \label{eq:Vsetdist}
            \subss{\tilde v}i=\subss v i+D_i \subss w i\in \tilde \Vset_i=(\bigoplus_{j\in\NN_i}\bar A_{ij}\Eset_j \oplus D_i \Wset_i).
          \end{equation}
          Since $\tilde\Vset_i$ is a zonotope, it can be written as $\tilde \Vset_i=\{\subss{\tilde v}{i}\in\Rset^{\tilde n_i}|\subss{\tilde v} i=\Psi_i \tilde d_i,\mbox{ }\norme{\tilde d_i}{\infty}\leq 1 \}$ where $\tilde n_i=\sum_{j\in\NN_i} n_j+r_i$ $\Psi_i = \matr{ \bar A_{ij_1}\Xi_{j_1} & \ldots & \bar A_{ij_z}\Xi_{j_z} & D_i\Delta_i}$, $j_1,\ldots,j_z\in\NN_i$.
          \begin{prop}
            \label{prop:maindist}
            For given matrices $L_{ij}$ and parameters $\delta_{ij}$, $i,j\in\MM$, if conditions \eqref{eq:pseudoinequalities} hold and
            \begin{equation}
              \label{eq:pseudoinequalitiesdist}
              \gamma_i =\sum_{k=0}^{\infty}\norme{\HH_i\bar A_{ii}^k\Psi_i}{\infty}<1,~\forall i\in\MM
            \end{equation}
            then, there exists an RPI set $\Sset_i\subseteq\Eset_i$ for \eqref{eq:erroridyntubedist}, such that $\Sset=\prod_{i\in\MM}\Sset_i$ is an RPI set for system \eqref{eq:errordyn}.
          \end{prop}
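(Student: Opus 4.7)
The plan is to parallel the scheme of Proposition~\ref{prop:main}, but with the coupling-plus-disturbance term lumped into $\subss{\tilde v}i$ and with the zonotope generator $\Psi_i$ taking the role that the neighbour generators play in \eqref{eq:betapseudo}. First I would invoke Proposition~\ref{prop:main}: since \eqref{eq:pseudoinequalities} is assumed to hold, $\mbf{\bar A}$ is already Schur, so in particular each $\bar A_{ii}$ is Schur. This makes the series $\sum_{k\ge 0}\bar A_{ii}^k\Psi_i$ absolutely convergent and the minimal RPI set of \eqref{eq:erroridyntubedist} well-defined as the Minkowski sum $\underline\Sset_i=\bigoplus_{k=0}^\infty \bar A_{ii}^k\tilde\Vset_i$.

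The core step would be showing $\underline\Sset_i\subseteq\Eset_i$ via a support-function computation. A generic $\underline e_i\in\underline\Sset_i$ has the form $\underline e_i=\sum_{k=0}^\infty \bar A_{ii}^k\Psi_i\tilde d_i^{(k)}$ with $\|\tilde d_i^{(k)}\|_\infty\le 1$, and the standard zonotope identity $\max_{\|d\|_\infty\le 1} h^T G d=\|h^T G\|_1$ applied row-by-row to $\HH_i$ yields the bound $\sup_{\underline e_i\in\underline\Sset_i} h_{i,s}^T\underline e_i \le \sum_{k=0}^\infty\|\HH_i\bar A_{ii}^k\Psi_i\|_\infty=\gamma_i$. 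Condition \eqref{eq:pseudoinequalitiesdist} then gives $\HH_i\underline e_i\le\gamma_i\One_{\bar\tau_i}$ componentwise with $\gamma_i<1$, i.e.\ strict inclusion $\underline\Sset_i\subset\Eset_i$.

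Because $\underline\Sset_i$ is generally not finitely determined, I would invoke the $\epsilon$-outer-approximation algorithms of \cite{Rakovic2005a,Rakovic2010} to build a polytopic RPI set $\Sset_i\supseteq\underline\Sset_i$ for \eqref{eq:erroridyntubedist}; the strict slack $1-\gamma_i>0$ lets me pick $\epsilon$ small enough that $\Sset_i\subseteq\Eset_i$ still holds. To close the argument for the product set $\Sset=\prod_{i\in\MM}\Sset_i$, take any $\mbf e\in\Sset$ and $\mbf w\in\Wset$; the actual forcing of subsystem $i$ is $\subss{\tilde v}i=\sum_{j\in\NN_i}\bar A_{ij}\subss e j+D_i\subss w i$, and the inclusions $\subss e j\in\Sset_j\subseteq\Eset_j$ and $\subss w i\in\Wset_i$ together with \eqref{eq:Vsetdist} place $\subss{\tilde v}i$ inside $\tilde\Vset_i$. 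RPI-ness of each $\Sset_i$ for \eqref{eq:erroridyntubedist} then yields $\subss\pe i\in\Sset_i$ and hence $\mbf\pe\in\Sset$, as required.

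The hard part will be the support-function bound of the second step: one must exploit the specific zonotope description of $\tilde\Vset_i$ so that the supremum collapses exactly onto the matrix-$\infty$-norm expression defining $\gamma_i$, thereby making \eqref{eq:pseudoinequalitiesdist} simultaneously sufficient for the slack, non-conservative enough to be useful, and locally computable from data attached only to subsystem $i$ and its parents, in line with Remark~\ref{rmk_trans_info}.
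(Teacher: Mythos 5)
Your proposal is correct and follows essentially the same route as the paper's proof: the mRPI set written as the Minkowski series $\bigoplus_{k}\bar A_{ii}^k\tilde\Vset_i$, the zonotope support-function bound $\sup h_{i,\tau}^T\underline e_i\le\sum_k\norme{\HH_i\bar A_{ii}^k\Psi_i}{\infty}=\gamma_i$, the $\epsilon$-outer approximation of \cite{Rakovic2005a} absorbed into the slack $1-\gamma_i>0$, and the componentwise RPI check for the product set $\Sset=\prod_i\Sset_i$. One cosmetic point: the Schurness of $\bar A_{ii}$ is hypothesis \eqref{eq:AiiSchur} directly, so you need not (and should not) infer it from Schurness of $\mbf{\bar A}$, since diagonal blocks of a Schur matrix need not themselves be Schur.
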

          \begin{proof}
            The proof is given in the Appendix \ref{sec:proofmainDist}.
          \end{proof}
          \begin{rmk}
            We note that if the subsystems are decoupled, then condition \eqref{eq:pseudoinequalitiesdist} implies that there exists an mRPI $\underline\Sset_i\subseteq \Eset_i$ for the local error dynamics \eqref{eq:erroridyntubedist}. Moreover, when subsystems are coupled and $\Wset_i=\{0\}$, if $\beta_i<1$ then $\gamma_i<1$. Indeed, $\Wset_i=\{0\}$ implies that $\Delta_i=0$ and, as shown in the proof of Proposition 1, it holds $\sum_{k=0}^{\infty}\sum_{j\in\NN_i}\norme{\HH_i\bar A_{ii}^k\bar A_{ij}\Xi_j^\flat}{\infty}\leq\sum_{k=0}^{\infty}\sum_{j\in\NN_i}\norme{\HH_i\bar A_{ii}^k\bar A_{ij}\HH_j^\flat}{\infty}$. Finally, the pieces of information needed for computing scalars $\gamma_i$ are the same needed for computing scalars $\beta_i$ (see Remark~\ref{rmk_trans_info}). \hfill{$\square$}
          \end{rmk}
          From results in Proposition \ref{prop:main} and \ref{prop:maindist}, Problem \ref{prob:estimator_properties} can be decomposed into the following independent design problems for $i\in\MM$.
          \subsubsection*{Problem $\PP_i$}
               \label{sec:design_Estimator_para}
               Check if there exist $L_{ii}$ and $\{L_{ij}\}_{j\in\NN_i}$ such that $\bar A_{ii}$ is Schur, $\beta_i<1$ and $\gamma_i<1$.\endproof

          \begin{rmk}
            \label{rmk:Esetchoice}
            As shown in \cite{Kolmanovsky1998}, a necessary condition for the existence of RPI sets $\Sset_i$ for \eqref{eq:erroridyntubedist} is that
            \begin{equation}
              \label{eq:bigE}
              \Eset_i\subseteq \tilde \Vset_i, ~\forall i\in\MM
            \end{equation}
            where $\tilde\Vset_i$ depend upon sets $\Eset_j$,
            $j\in\NN_i$, see \eqref{eq:Vsetdist}. 
            In our approach, sets $\Eset_i$ are assigned \textit{a priori} on the basis, e.g. of application-dependent constraints. Therefore we implicitly assume conditions \eqref{eq:bigE} are verified. However, if subsystems are added sequentially to an existing plant and LSEs are designed with the PnP procedure described in Section~\ref{sec:plugplay}, conditions \eqref{eq:bigE} are automatically checked and, if violated, they prevent from plugging-in subsystem $\subss \Sigma i$. We also highlight that when sets $\Eset_i$ can be arbitrarily chosen, centralized methods for fulfilling conditions \eqref{eq:bigE} exist \cite{Farina2011b}.
          \end{rmk}
          
          \subsection{Optimization-based synthesis of LSEs}
               \label{sec:computational}
               The procedure for solving problems $\PP_i$, $i\in\MM$ is summarized in Algorithm 1 that can be executed in parallel by each subsystem using local hardware.

               \begin{algorithm}
                 \caption{}
                 \label{alg:designdec}
                 \textbf{Input}: zonotopes $\Eset_i$, $\Wset_i$ and scalars $\delta_{ij}, \forall j\in\NN_i$.\\
                 \textbf{Output}: set $\Sset_i$ and state estimator $\subss{\tilde\Sigma}i$.\\
                 \begin{enumerate}
                 \item\label{alg:step1} if $\delta_{ij}=1$, compute the matrix $L_{ij}$, $\forall j\in\NN_i$ solving
                   \begin{equation}
                     \label{eq:Lijopt}
                     \min_{\substack{L_{ij}}} \norme{\HH_i\bar A_{ij}\HH_j^\flat}{p}
                   \end{equation}
                   where either $p=1$ or $p=F$.
                 \item\label{alg:step2} compute a matrix $L_{ii}$ such that $\beta_i<1$ and $\gamma_i<1$. If it does not exist \textbf{stop};
                 \item\label{alg:step3} compute the set $\Sset_i$.
                 \end{enumerate}
               \end{algorithm}

               In step (\ref{alg:step1}), if $\delta_{ij}=1$, the computation of matrices $L_{ij}$, $j\in\NN_i$ is required. Since the choice of $L_{ij}$ affects the coupling term $\bar A_{ij}=A_{ij}+\delta_{ij}L_{ij}C_j$, and hence the possibility of verifying inequalities \eqref{eq:pseudoinequalities} and \eqref{eq:pseudoinequalitiesdist}, we propose to reduce the magnitude of coupling by minimizing the magnitude of $\bar A_{ij}$ in \eqref{eq:Lijopt}, where $\HH_i$ and $\HH_j^\flat$ allow us to take into account the size of sets $\Eset_i$ and $\Eset_j$, respectively. More precisely, it can be shown that the term $\norme{\HH_i\bar A_{ij}\HH_j^\flat}{p}$ is a  measure of how much the coupling term  $\bar A_{ij}\subss e j$, $j\in\NN_i$  affects the fulfillment of the constraint $\subss e i\in\Eset_i$ (see Appendix \ref{sec:Lijoptnotes}). We highlight that the minimization of  $\| \HH_i\bar A_{ij}\HH_j^\flat\|_1$ in \eqref{eq:Lijopt} amounts to an LP problem and the minimization of $\| \HH_i\bar A_{ij}\HH_j^\flat\|_F$ can be recast into a Quadratic Programming (QP) problem. So far, the parameters $\delta_{ij}$ have been considered fixed. However, if in step (\ref{alg:step1}) one obtains $L_{ij}=0$ for some $j\in\NN_i$, it is impossible to reduce the magnitude of the coupling term $\bar A_{ij}$ and the knowledge of $\subss y j$ is useless for estimator $\subss{\tilde\Sigma} i$. This suggests to revise the choice of $\delta_{ij}$ and set $\delta_{ij}=0$.

               In step (\ref{alg:step2}), for the computation of matrix $L_{ii}$ we propose an automatic method in order to guarantee satisfaction of inequalities \eqref{eq:pseudoinequalities} and \eqref{eq:pseudoinequalitiesdist}. This procedure parallels the method proposed in \cite{Riverso2013c} for control design. Since in \eqref{eq:AiiSchur} we require the Schurness of matrix $\bar A_{ii}$, we need to guarantee that $L_{ii}$ stabilizes the pair $(A_{ii},C_i)$. In order to achieve this aim we design $L_{ii}$ as the dual LQ control gain associated to matrices $Q_i\geq0$ and $R_i>0$, i.e.
               \begin{equation*}
                 \label{eq:Lilq}
                 L_{ii}=(R_i+C_i\bar{P}_iC_i^T)^{-1}C_i\bar{P}_iA_{ii}^T
               \end{equation*}
               where $\bar{P}_i$ is the solution to the algebraic Riccati equation
               \begin{equation}
                 \label{eq:ricattiLii}
                 A_{ii}\bar{P}_iA_{ii}^T+Q_i-A_{ii}\bar{P}_iC_i^T(R_i+C_i\bar{P}_iC_i^T)^{-1}C_i\bar{P_i}A_{ii}^T=\bar{P}_i.
               \end{equation}
               We then solve the following nonlinear optimization problem
               \begin{subequations}
                 \label{eq:optimLi}
                 \begin{align}
                   \label{eq:costoptimLi}\min_{\substack{Q_i,~R_i}}&~\beta_i\\
                   &\label{eq:QRoptimLi}Q_i\geq 0,~R_i>0\\
                   &\label{eq:betaoptmLi}\beta_i<1\\
                   &\label{eq:thetaoptimLi}\gamma_i<1
                 \end{align}
               \end{subequations}
               where constraint \eqref{eq:thetaoptimLi} is needed only if $\Wset_i\neq\{0\}$. In order to simplify the optimization problem \eqref{eq:optimLi} one can assume $Q_i=\diag(q_{i,1},\ldots,q_{i,n_i})$, $R_i=\diag(r_{i,1},\ldots,r_{i,m_i})$ and replace the matrix inequalities in \eqref{eq:QRoptimLi} with the scalar inequalities $q_{i,k}\geq 0$, $k\in 1:n_i$ and $r_{i,k}> 0$, $k\in 1:m_i$. The feasibility of problem \eqref{eq:optimLi} guarantees that the estimator $\subss{\tilde\Sigma} i$ can be successfully designed. Note that if all matrices $L_{ij}$, $j\in\NN_i$ are such that $\bar A_{ij}=0$, the inequality \eqref{eq:betaoptmLi} is always fulfilled and, when $\Wset=\{0\}$, the optimization problem \eqref{eq:optimLi} is reduced to the solution of the algebraic Riccati equation \eqref{eq:ricattiLii}.

               In step (\ref{alg:step3}) of Algorithm \ref{alg:designdec} we need to compute a nonempty RPI set $\Sset_i\subseteq\Eset_i$ that, in view of Propositions \ref{prop:main} and \ref{prop:maindist}, exists if the optimization problem \eqref{eq:optimLi} is feasible. To this purpose, several algorithms can be used. For instance, \cite{Rakovic2005a} discusses the computation of $\epsilon$-outer approximation of the mRPI $\underline\Sset_i$. The MRPI set $\bar\Sset_i$ can be obtained using methods in \cite{Gilbert1991}. More recently, efficient procedures have been also proposed for computing polytopic \cite{Rakovic2010} or zonotopic \cite{Rakovic2005} RPI sets.

     \section{Plug-and-play operations}
          \label{sec:plugplay}
          Consider a plant composed by subsystems $\subss \Sigma i$, $i\in\MM$ equipped with local state estimators $\subss{\tilde\Sigma} i$, $i\in\MM$ produced by Algorithm \ref{alg:designdec}. In case subsystems are added or removed, we show how to preserve properties \eqref{eq:conv} and \eqref{eq:bounderror} by updating a limited number of existing LSEs. Note that plugging in and unplugging of subsystems are here considered as off-line operations, i.e. they do not lead to switching between different dynamics in real time.
          
          \subsection{Plugging in operation}
               \label{sec:plugin}
               We start considering the plugging in of subsystem $\subss{\Sigma}{M+1}$, characterized by parameters $A_{M+1,M+1}$, $C_{M+1}$, $\Eset_{M+1}$, $\Wset_{M+1}$, $\NN_{M+1}$ and $\{A_{M+1,j}\}_{j\in\NN_{M+1}}$. In particular, $\NN_{M+1}$ identifies the subsystems that will influence $\subss{\Sigma}{M+1}$ through matrices $\{A_{M+1,j}\}_{j\in\NN_{M+1}}$. Subsystems that will be influenced by $\subss \Sigma {M+1}$ are given by $\SSS_{M+1}$ where \[\SSS_i=\{j:i\in\NN_j\}\] is the set of children of subsystem $\subss \Sigma i$. For designing the LSE $\subss{\tilde\Sigma}{M+1}$ we execute Algorithm \ref{alg:designdec} that needs information only from subsystems $\subss{\Sigma}{j}$, $j\in\NN_{M+1}$. If Algorithm \ref{alg:designdec} stops before the last step, we declare that $\subss{\Sigma}{M+1}$ cannot be plugged in. Since sets $\NN_j$, $j\in\SSS_{M+1}$ have now one more element, previously obtained matrices $L_{jj}$, $j\in\SSS_{M+1}$ might give $\beta_i\geq 1$ or $\gamma_i\geq 1$. Indeed, quantities $\beta_i$ and $\gamma_i$ in \eqref{eq:pseudoinequalities} and \eqref{eq:pseudoinequalitiesdist} can only increase. Furthermore, the size of the set $\Sset_j$ increases and therefore the condition $\Sset_j\subseteq\Eset_j$ could be violated. This means that for each $j\in\SSS_{M+1}$ the LSE $\subss{\tilde\Sigma} j$ must be redesigned by running Algorithm \ref{alg:designdec}. Again, if Algorithm \ref{alg:designdec} stops before completion for some $j\in\SSS_{M+1}$, we declare that $\subss{\Sigma}{M+1}$ cannot be plugged in.\\
               Note that LSE redesign does not propagate further in the network, i.e. even without changing state estimators $\subss{\tilde\Sigma} i$, $i\notin\{M+1\}\bigcup\SSS_{M+1}$, properties  \eqref{eq:conv} and \eqref{eq:bounderror} are guaranteed for the new DSE.

          \subsection{Unplugging operation}
               \label{sec:unplug}
               We consider the unplugging of  system $\subss{\Sigma}{k}$, $k\in\MM$. Since for each $i\in\SSS_k$ the set $\NN_i$ contains one element less, one has that $\beta_i$ in \eqref{eq:pseudoinequalities} and $\gamma_i$ in \eqref{eq:pseudoinequalitiesdist} cannot increase. Furthermore, the set $\Sset_i^0$, chosen before the removal of system $\subss\Sigma k$, still verifies $\Sset_i^0\supseteq \tilde \Vset_i$ and therefore previously obtained optimizers for problem \eqref{eq:Lijopt} can still be used. This means that for each $i\in\SSS_k$ the LSE $\subss{\tilde\Sigma} i$ does not have to be redesigned. Moreover, since for each system $\subss\Sigma j$, $j\notin\{k\}\bigcup\SSS_k$, the set $\NN_j$ does not change, the redesign of the LSE $\subss{\tilde\Sigma} j$ is not required.\\
               In conclusion, the removal of system $\subss\Sigma k$ does not require the redesign of any LSE in order to guarantee \eqref{eq:conv} and \eqref{eq:bounderror}. However systems $\subss\Sigma i$ $i\in\SSS_k$ have one parent less and the redesign of LSEs $\subss{\tilde\Sigma} i$ through Algorithm \ref{alg:designdec} could improve the performance.

     \section{Example}
          \label{sec:example}
          We consider a system composed by $16$ masses coupled as in Figure \ref{fig:masses} where the four edges connected to a point correspond to springs and dampers arranged as in Figure \ref{fig:exampleCart2D}.
          \begin{figure}[htb]
            \centering
            \def\svgwidth{172pt}
            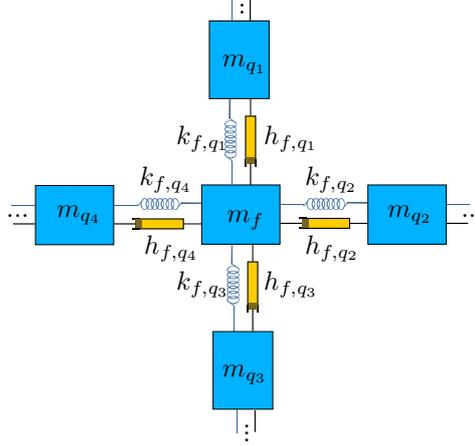
            \caption{Array of masses: details of interconnections.}
            \label{fig:exampleCart2D}
          \end{figure}
          Each mass $f\in1:16$ is an LTI system with state variables $\subss x f=(\subss x {f,1},\subss x {f,2},\subss x {f,3},\subss x {f,4})$ and input $\subss u f=(\subss u {f,1},\subss u {f,2})$, where $\subss x {f,1}$ and $\subss x {f,3}$ are the displacements of mass $f$ with respect to a given equilibrium position in the plane (equilibria lie on a regular grid), $\subss x {f,2}$ and  $\subss x {f,4}$ are the horizontal and vertical velocity of the mass $f$, respectively, and $100\subss u {f,1}$ (respectively $100\subss u {f,2}$) is the force applied to mass $f$ in the horizontal (respectively, vertical) direction. The values of $m_f$ have been extracted randomly in the interval $[5,10]$ while spring constants and damping coefficients are identical and equal to $0.5$. Each mass is equipped with local state estimation error constraints $\norme{\subss e {f,j}}{\infty}\leq 1$, $j=1,3$ and $\norme{\subss e {f,l}}{\infty}\leq 1.5$, $l=2,4$.

          A subsystem $\subss\Sigma i$, $i\in\MM=1:4$ is a group of four masses as in Figure \ref{fig:masses}. Therefore each subsystem has order $16$ and two neighbors. For each subsystem $\subss{\Sigma} i$ we have $8$ outputs that are the displacements of two masses and the velocities of the other two masses.
          \begin{figure}[!ht]
            \centering
            \includegraphics[scale=0.8]{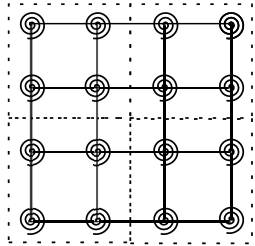}
            \caption{Position of the $16$ masses on the plane. Dashed lines define subsystems $\subss\Sigma i$, $i\in\MM=1:4$.}
            \label{fig:masses}
          \end{figure}
          We obtain models $\subss\Sigma i$ by discretizing continuous-time models with $0.2~$sec sampling time, using zero-order hold discretization for the local dynamics and treating $\subss x j,~j\in\NN_i$ as exogenous signals. We design an LSE $\subss{\tilde\Sigma} i$, $i\in\MM$ using Algorithm \ref{alg:designdec} and assuming matrices $Q_i$ and $R_i$ in \eqref{eq:optimLi} are diagonal. In Figure \ref{fig:nodist} we show a simulation where the initial state of each mass is $\subss x f(0)=0$, $f\in 1:16$ and the control inputs $\subss u {f,l}(k)=0.1\sin(k)$, $l\in 1:2$, have been used. We initialize each LSE in order to have $\subss e i\in\Sset_i$. Estimation results produced by LSEs that have been designed with $\delta_{ij}=0$, $j\in\NN_i$ are represented in Figures \ref{fig:sub1nodist} and \ref{fig:sub3nodist}. Results obtained by setting $\delta_{ij}=1$, $j\in\NN_i$ are shown in Figures~\ref{fig:sub2nodist} and \ref{fig:sub4nodist}. One can notice that in both cases, state estimation errors converge to zero and they are bounded at all times.
          \begin{figure}[!ht]
            \centering
            \subfigure[\label{fig:sub1nodist}State (dashed lines) and state estimation (continuous line) of the upper left mass in Figure \ref{fig:masses} at time instants $k=0:29$.]{\includegraphics[scale=0.22]{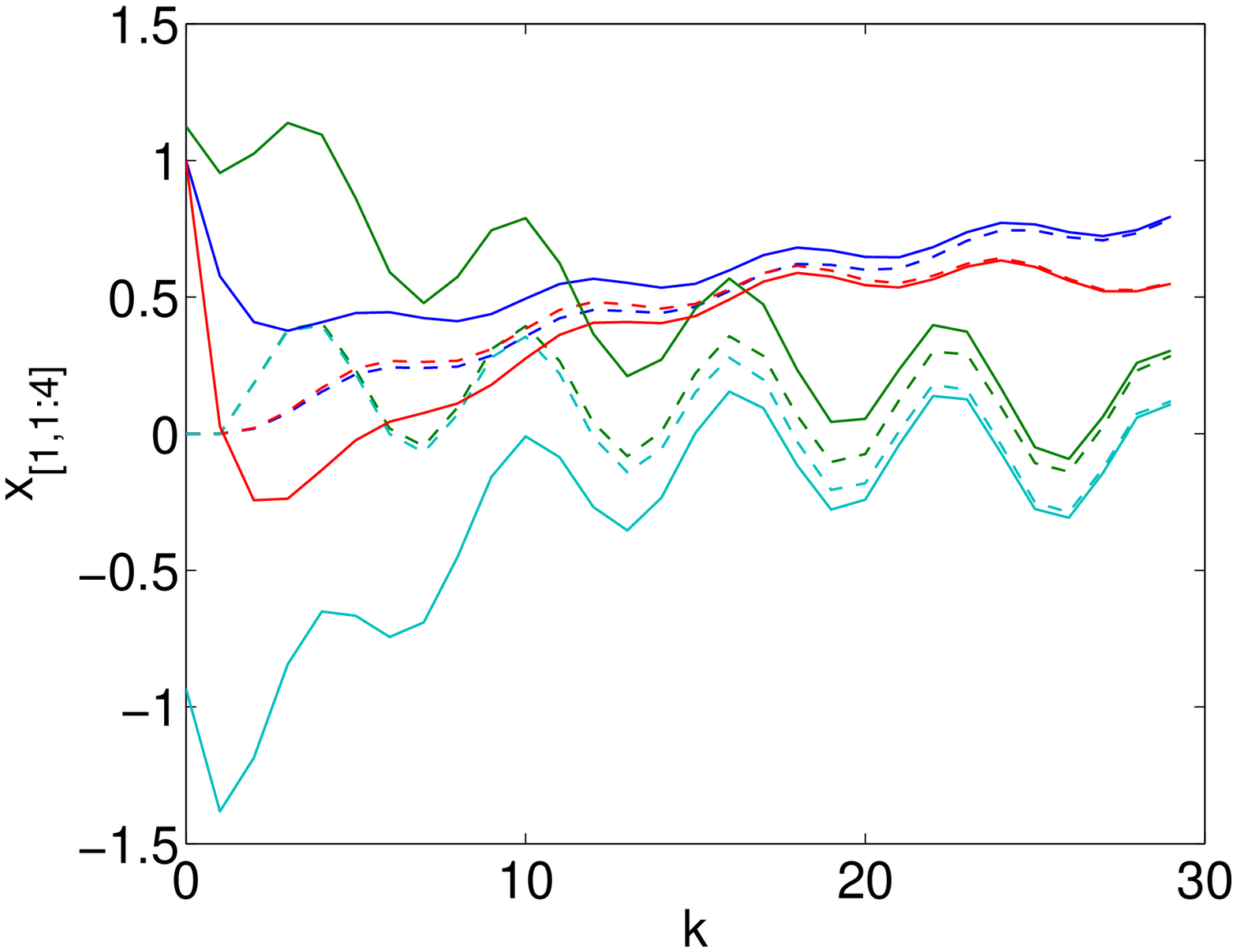}}~~
            \subfigure[\label{fig:sub2nodist}State (dashed lines) and state estimation (continuous line) of the upper left mass in Figure \ref{fig:masses} at time instants $k\in0:29$.]{\includegraphics[scale=0.22]{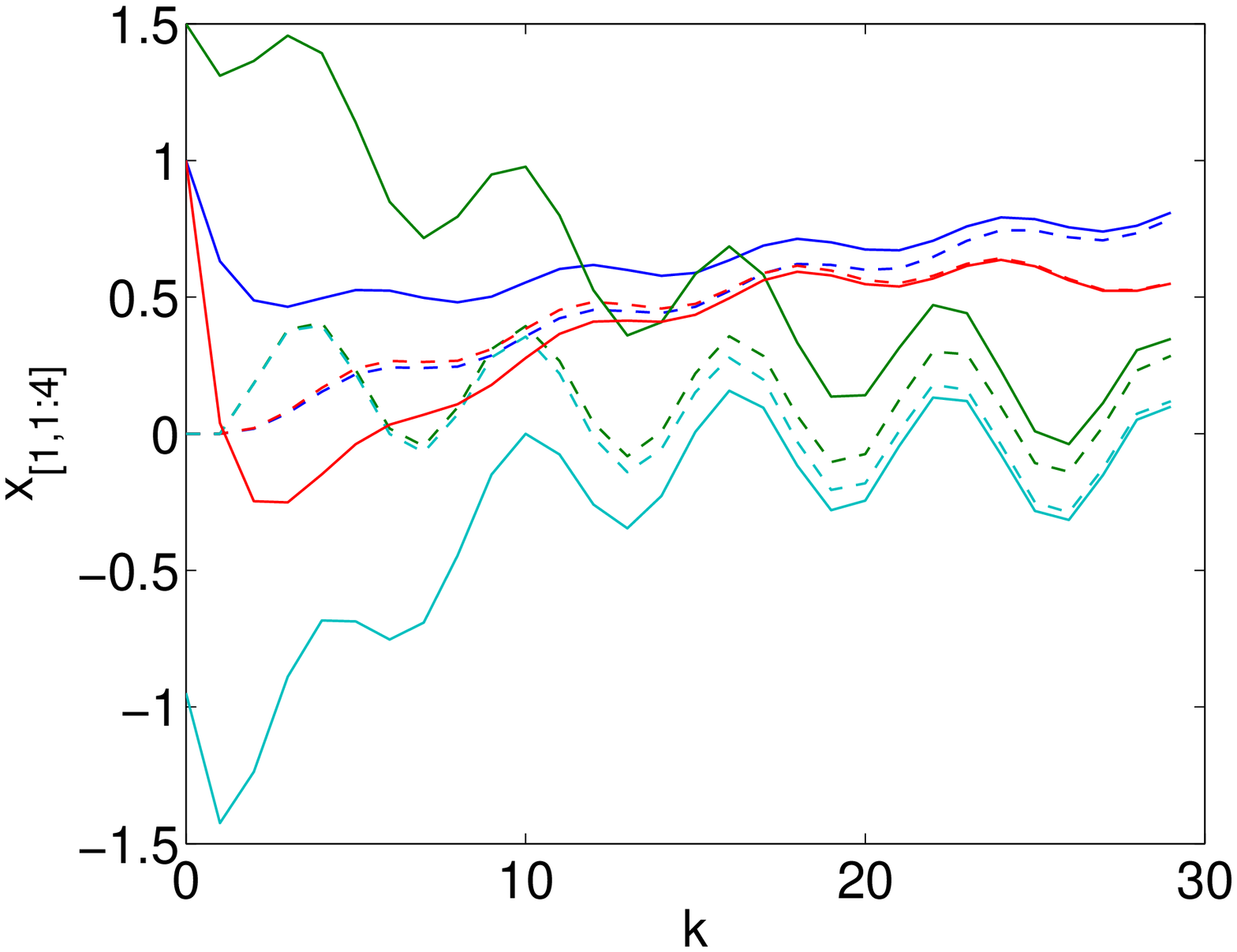}}\\
            \subfigure[\label{fig:sub3nodist}Estimation errors for all states at times $k\in0:99$.]{\includegraphics[scale=0.22]{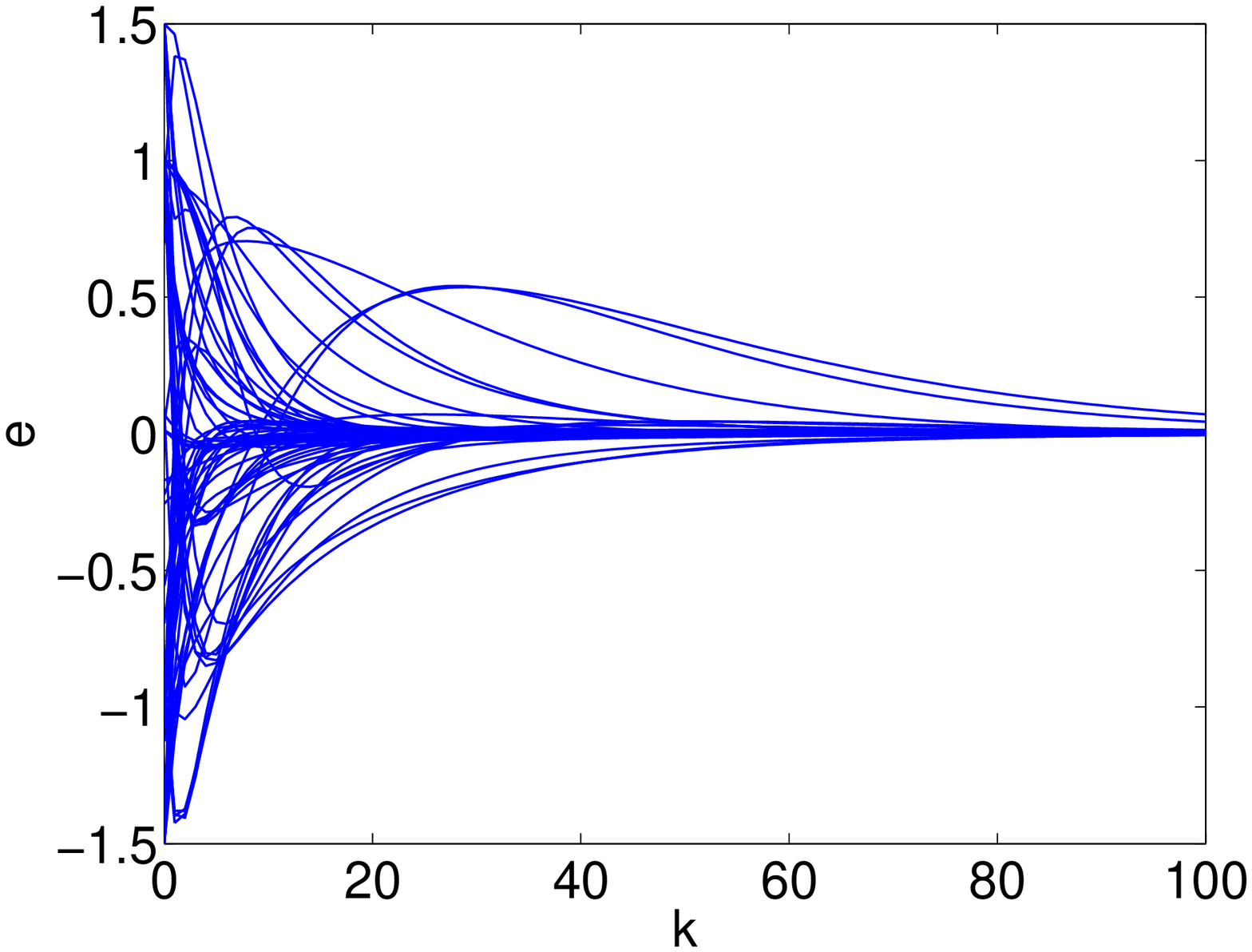}}~~
            \subfigure[\label{fig:sub4nodist}Estimation errors for all states at times $k\in0:99$.]{\includegraphics[scale=0.22]{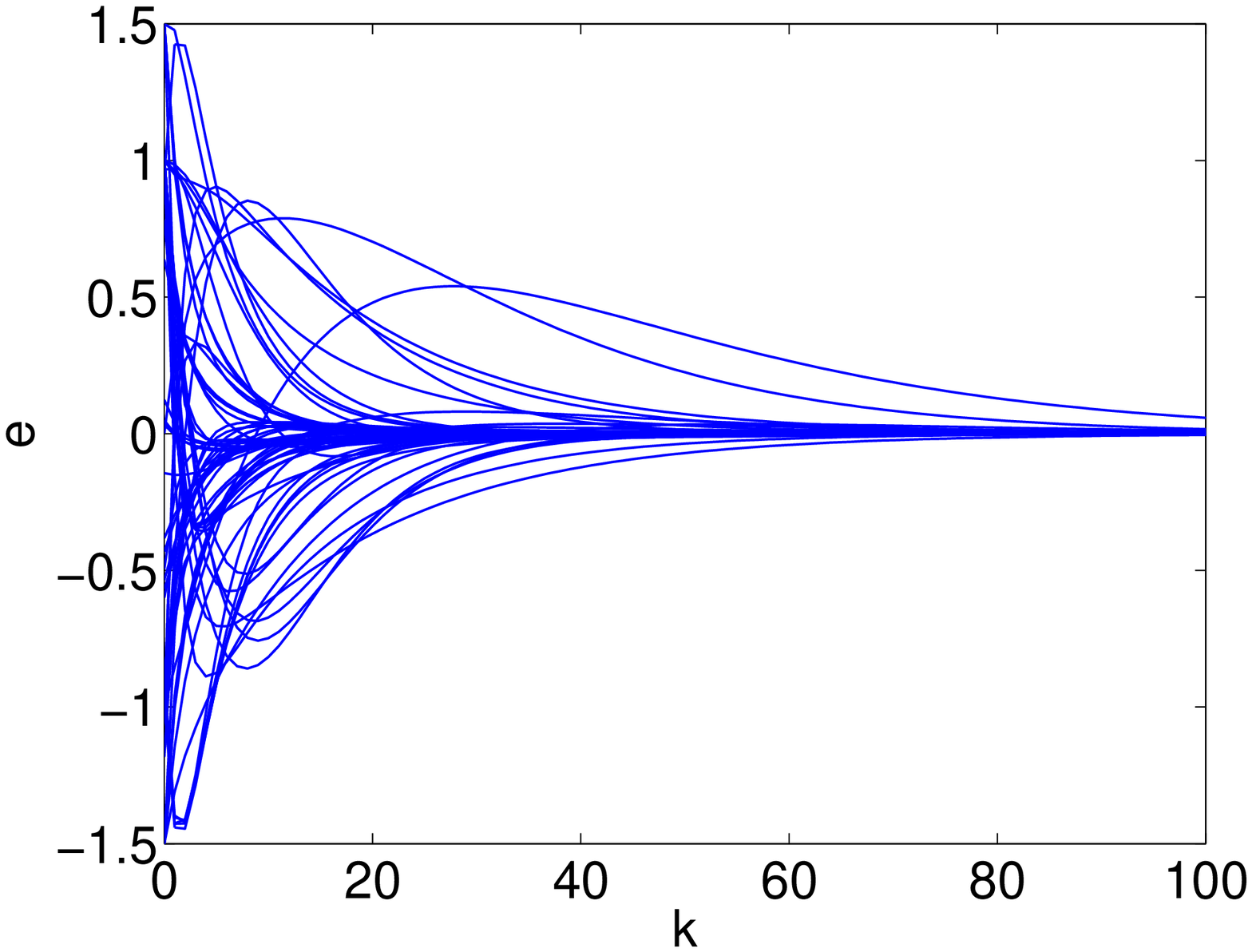}}
            \caption{State estimation results for LSEs  designed setting $\delta_{ij}=0$, $j\in\NN_i$ (panels \ref{fig:sub1nodist} and \ref{fig:sub3nodist}) and $\delta_{ij}=1$, $j\in\NN_i$ (panels \ref{fig:sub2nodist} and \ref{fig:sub4nodist}). In panels \ref{fig:sub1nodist} and \ref{fig:sub2nodist} the same color has been used for a state and its estimate: cyan and green lines denote velocities while blue and red lines denote positions.}
            \label{fig:nodist}
          \end{figure}
          
          In Figure \ref{fig:dist} we show a simulation where each state of subsystem $\subss\Sigma i$, $i\in 1:4$ is affected by a disturbance $\subss w i$ sampled from the uniform distribution in the set $\Wset_i=\{\subss w i\in\Rset:\abs{\subss w i}\leq 0.015\}$. This has been obtained setting $D_i=\One_{16}$.

          Figures \ref{fig:sub1dist} and \ref{fig:sub3dist} show results produced by LSEs designed with $\delta_{ij}=0$, $j\in\NN_i$ while Figures \ref{fig:sub2dist} and \ref{fig:sub4dist} show the results obtained for $\delta_{ij}=1$, $j\in\NN_i$. In both cases, errors fulfill the prescribed bounds but do not converge to zero because of the persistent disturbances $\subss w i$, $i\in 1:4$.
          \begin{figure}[!ht]
            \centering
            \subfigure[\label{fig:sub1dist}State (dashed lines) and state estimation (continuous line) of the upper left mass in Figure \ref{fig:masses} at time instants $k=0:29$.]{\includegraphics[scale=0.22]{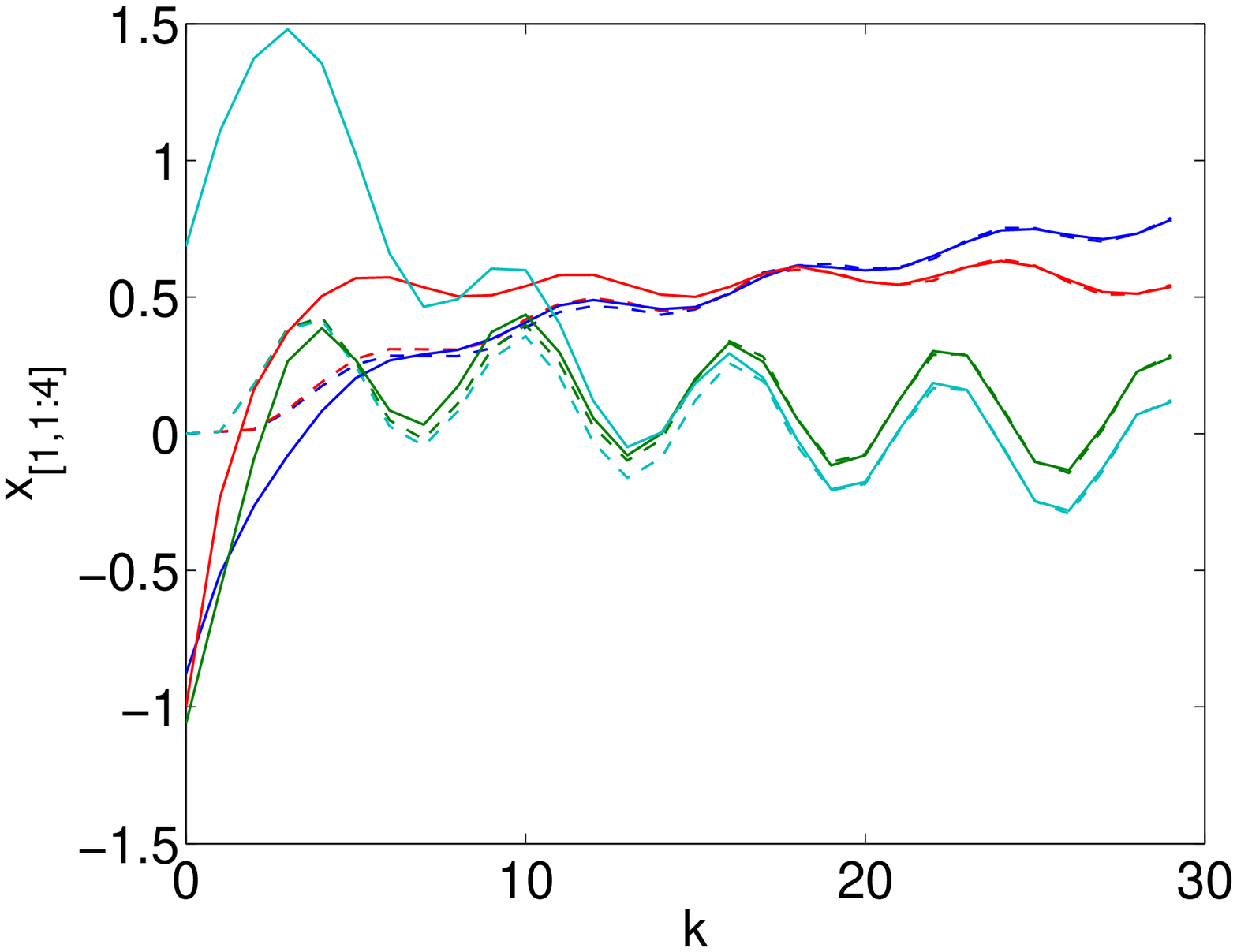}}~~
            \subfigure[\label{fig:sub2dist}State (dashed lines) and state estimation (continuous line) of the upper left mass in Figure \ref{fig:masses} at time instants $k=0:29$.]{\includegraphics[scale=0.22]{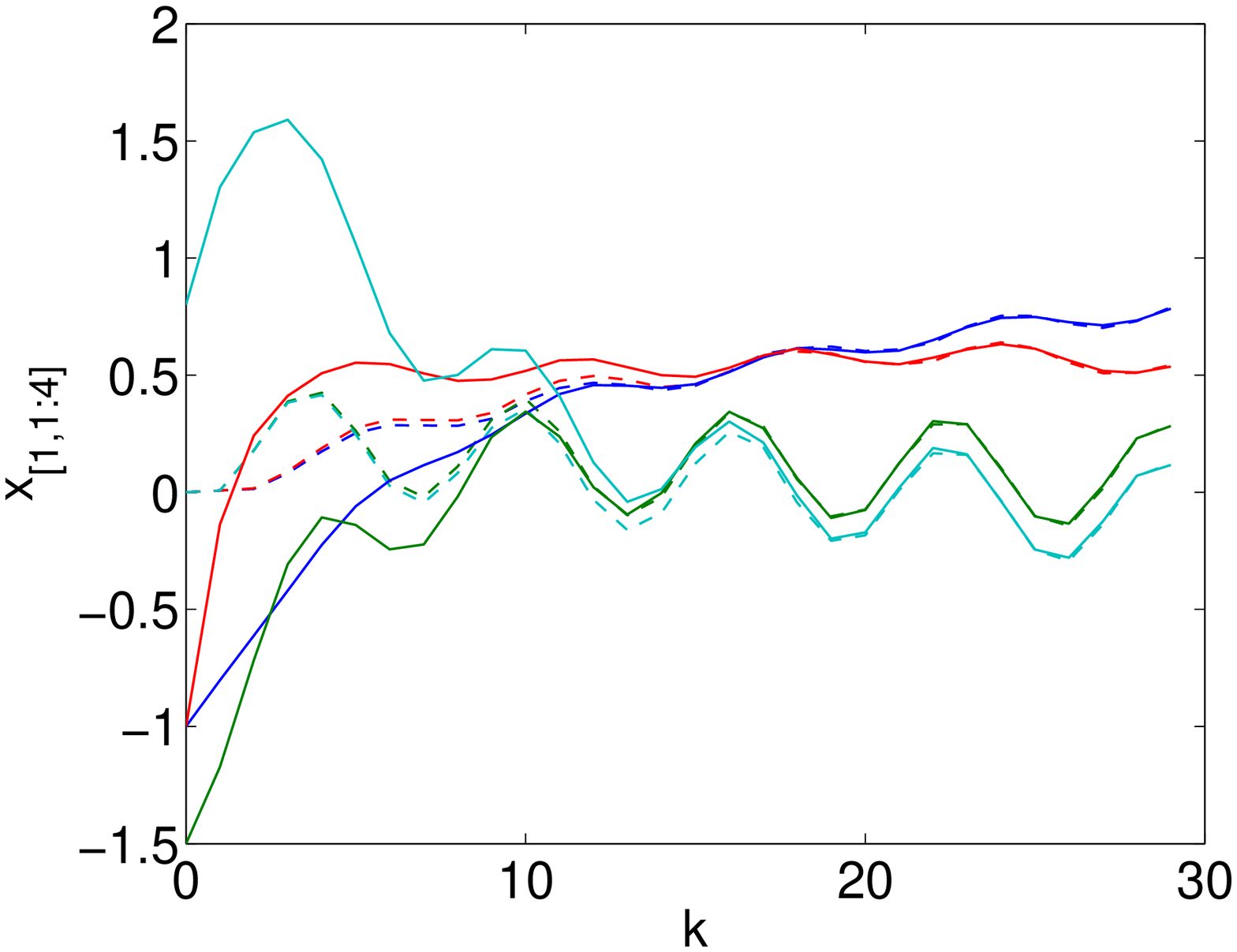}}\\
            \subfigure[\label{fig:sub3dist}Estimation errors for all states at times $k\in0:99$.]{\includegraphics[scale=0.22]{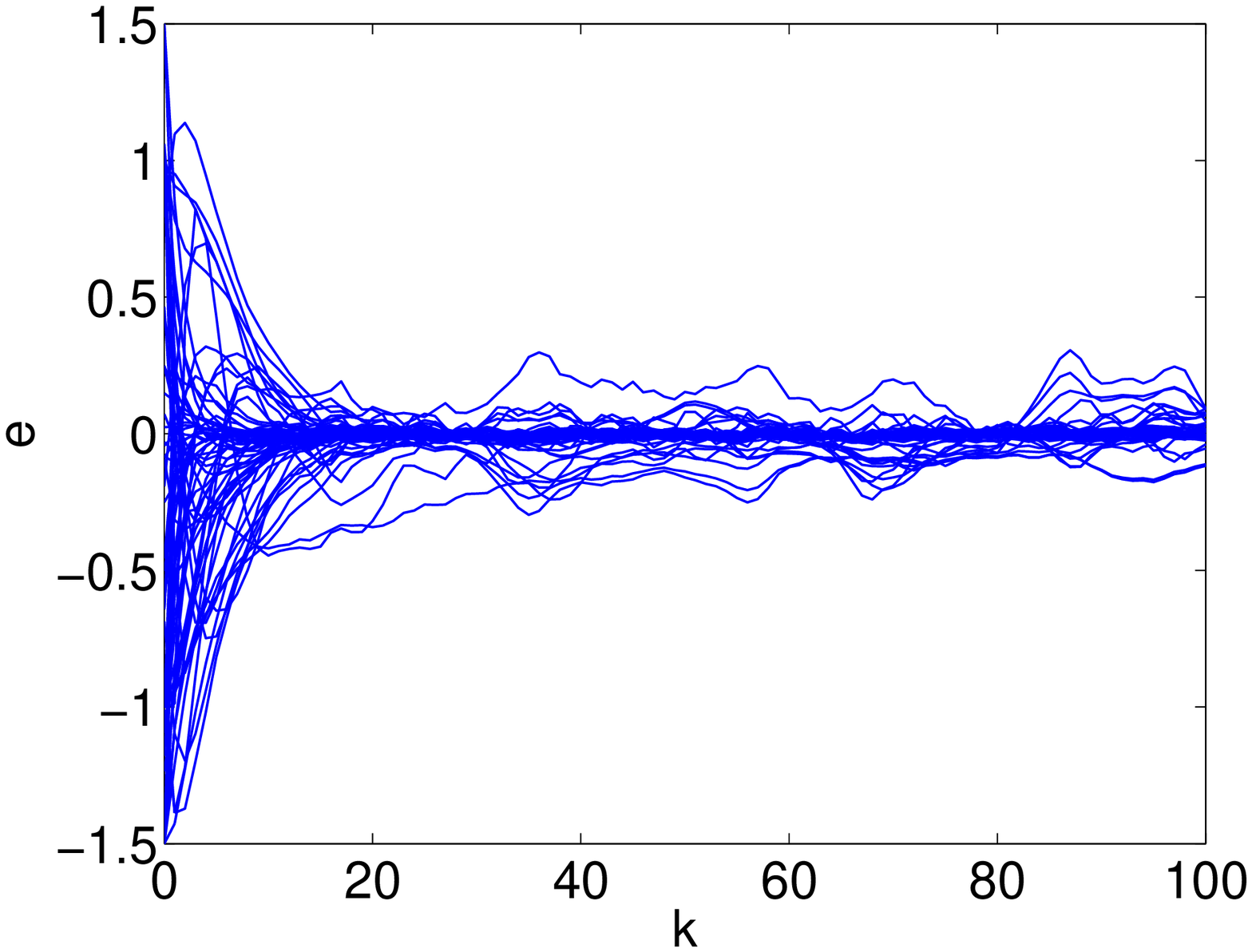}}~~
            \subfigure[\label{fig:sub4dist}Estimation errors for all states at times $k\in0:99$.]{\includegraphics[scale=0.22]{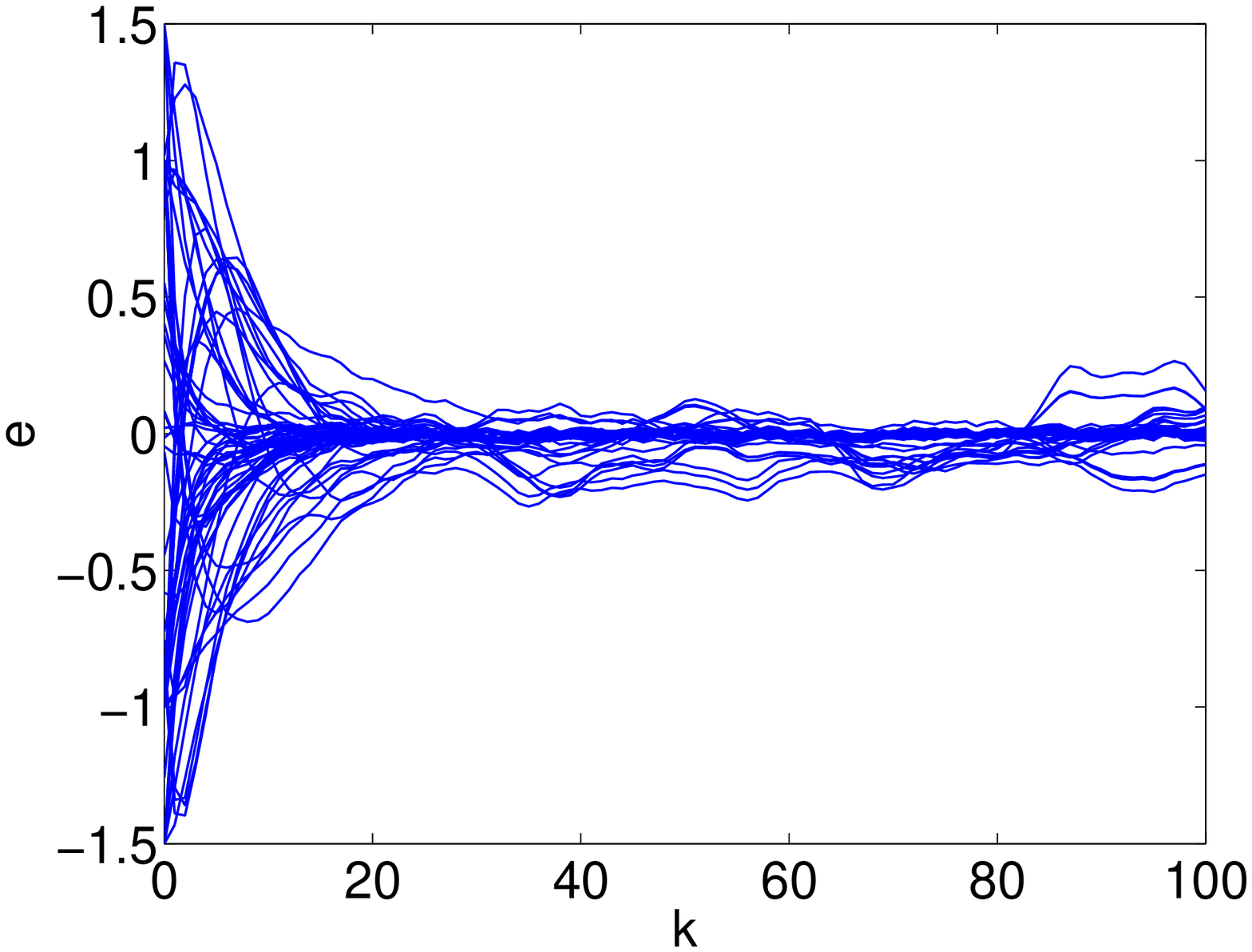}}
            \caption{State estimation results for LSEs designed setting $\delta_{ij}=0$, $j\in\NN_i$ (panels \ref{fig:sub1dist} and \ref{fig:sub3dist}) and $\delta_{ij}=1$, $j\in\NN_i$ (panels \ref{fig:sub2dist} and \ref{fig:sub4dist}). In panels \ref{fig:sub1dist} and \ref{fig:sub2dist} the same color has been used for a state and its estimate: cyan and green lines denote velocities while blue and red lines denote positions.}
            \label{fig:dist}
          \end{figure}
          
     \section{Conclusions}
          \label{sec:conclusions}
          We have proposed a novel DSE for large-scale linear perturbed systems, which guarantees that the estimation errors are bounded into prescribed sets and converge to zero in absence of disturbances. The algorithm is based on the partition of the overall system into subsystems with non-overlapping states. In particular, the design of LSEs can be carried out in a decentralized fashion by solving a suitable optimization problem where just information by parent nodes is required. This allows one to efficiently update the overall DSE when subsystems are plugged in and out. \\
          Future works include the design of output-feedback PnP schemes combining the state estimator proposed in this paper and the state-feedback PnP controller presented in \cite{Riverso2013c}.

     \section{Appendix}
          \subsection{Proof of Proposition \ref{prop:main}}
               \label{sec:proofmain}
               The proof uses arguments that are similar to the ones adopted for proving points (I) and (II) of Theorem 2 in \cite{Riverso2012a}.
               \subsubsection{Proof of (I)}
                    Define a matrix $\Metz$ such that its $ij$-th entry $\mu_{ij}$ is
                    $$
                    \begin{array}{lcl}
                      \mu_{ij}=-1&\text{if}&i=j\\
                      \mu_{ij}=\sum_{k=0}^{\infty}\norme{\HH_i\bar A_{ii}^k\bar A_{ij}\HH_j^\flat}{\infty}&\text{if}&i\neq j.
                    \end{array}
                    $$
                    Note that all the off-diagonal entries of matrix $\Metz$ are non-negative, i.e., $\Metz$ is Metzler~\cite{Farina2000}. We recall the following results.
                    \begin{lem}[see \cite{Mason2007}]
                      \label{lem:metzhur}
                      Let matrix $\Metz\in\Rset^{M\times M}$ be Metzler. Then $\Metz$ is Hurwitz if and only if there is a vector $\nu\in\Rset_+^M$ such that $\Metz\nu<\Zero_M$.
                    \end{lem}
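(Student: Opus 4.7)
The plan is to prove both implications by reducing to spectral properties of a non-negative matrix, exploiting the characteristic feature of Metzler $\Metz$ that the semigroup $e^{\Metz t}$ preserves the non-negative orthant. Choose any $\alpha>0$ large enough that $B:=\Metz+\alpha\eye{M}$ is entry-wise non-negative; this is possible because only the diagonal entries of $\Metz$ may be negative. Since the spectrum of $B$ equals that of $\Metz$ shifted by $\alpha$, the matrix $\Metz$ is Hurwitz if and only if the Perron spectral radius satisfies $\rho(B)<\alpha$.

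For the sufficiency direction, assume $\nu\in\Rset_+^M$ has strictly positive entries and $\Metz\nu<\Zero_M$. Rewriting gives $B\nu<\alpha\nu$ componentwise, so the Collatz--Wielandt-type upper bound for non-negative matrices yields $\rho(B)\le\max_i(B\nu)_i/\nu_i<\alpha$, whence $\Metz$ is Hurwitz.

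For the necessity direction, suppose $\Metz$ is Hurwitz. The factorization $e^{\Metz t}=e^{-\alpha t}e^{Bt}$ together with $B\ge 0$ shows that $e^{\Metz t}\ge 0$ entry-wise for every $t\ge 0$. The Hurwitz property makes $(-\Metz)^{-1}=\int_0^\infty e^{\Metz t}\,dt$ convergent and entry-wise non-negative. Set $\nu:=(-\Metz)^{-1}\One_M$; then $\nu\ge\Zero_M$ and $-\Metz\nu=\One_M>\Zero_M$, i.e.\ $\Metz\nu<\Zero_M$. To upgrade $\nu\ge\Zero_M$ to strict positivity, suppose some $\nu_i=0$; then $(-\Metz\nu)_i=-\sum_{j\neq i}\mu_{ij}\nu_j\le 0$ because $\mu_{ij}\ge 0$ off-diagonal and $\nu_j\ge 0$, contradicting $(-\Metz\nu)_i=1$. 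Hence $\nu>\Zero_M$ coordinatewise.

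The main obstacle is only to invoke the two classical ingredients in the right form: the Collatz--Wielandt upper bound $\rho(B)\le\max_i(B\nu)_i/\nu_i$ valid for $B\ge 0$ and $\nu>0$, and the positivity of the semigroup generated by a Metzler matrix. Neither requires any irreducibility assumption on $\Metz$, which matches the generality of the lemma, and beyond these two facts the argument is a short bookkeeping exercise on non-negativity.
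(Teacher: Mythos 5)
Your proof is correct, but note that the paper itself does not prove this lemma: it is stated as a known result and attributed to \cite{Mason2007}, so there is no in-paper argument to compare against. What you give is the standard self-contained proof: shift $\Metz$ by $\alpha\eye{M}$ to get a non-negative $B$, use the Collatz--Wielandt bound $\rho(B)\le\max_i (B\nu)_i/\nu_i<\alpha$ for sufficiency, and use non-negativity of $(-\Metz)^{-1}=\int_0^\infty e^{\Metz t}\,dt$ (a consequence of $e^{\Metz t}=e^{-\alpha t}e^{Bt}\ge 0$) to produce $\nu=(-\Metz)^{-1}\One_M$ for necessity; both ingredients indeed hold without irreducibility. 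Two small presentational points. First, in the sufficiency direction you assume at the outset that $\nu$ is strictly positive, whereas the hypothesis only gives $\nu\in\Rset_+^M$ (non-negative, in the paper's notation); the upgrade to strict positivity is forced by $\Metz\nu<\Zero_M$ via exactly the off-diagonal argument you spell out later in the necessity part, so that observation should be invoked before applying Collatz--Wielandt. Second, the asserted equivalence ``$\Metz$ is Hurwitz iff $\rho(B)<\alpha$'' requires, for its ``only if'' half, the Perron--Frobenius fact that $\rho(B)$ is an eigenvalue of the non-negative matrix $B$ --- spectral shifting alone yields only the ``if'' half; fortunately your necessity argument bypasses $\rho(B)$ entirely, so only the easy half of that equivalence is ever used and the proof stands as a whole.
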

                    \begin{lem}
                      \label{lem:metzschur}
                      Define the matrix $\Gamma=\Metz+\eye M$ where $\Metz\in\Rset^{M\times M}$, $\eye M$ is the $M\times M$ identity matrix and  $\Gamma$ is non negative. Then the Metzler matrix $\Metz$ is Hurwitz if and only if $\Gamma$ is Schur.
                    \end{lem}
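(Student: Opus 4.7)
The plan is to reduce the equivalence to the Perron--Frobenius theorem applied to $\Gamma$, via the obvious spectral shift $\sigma(\Gamma)=\{1+\lambda:\lambda\in\sigma(\Metz)\}$. By definition, $\Metz$ is Hurwitz iff the spectral abscissa $\alpha(\Metz):=\max_{\lambda\in\sigma(\Metz)}\mathrm{Re}(\lambda)$ is negative, while $\Gamma$ is Schur iff $\rho(\Gamma)=\max_{\lambda\in\sigma(\Metz)}|1+\lambda|<1$. The goal is therefore to prove the identity $\rho(\Gamma)=1+\alpha(\Metz)$, after which the stated equivalence is immediate.

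Since $\Gamma\ge 0$ by hypothesis, Perron--Frobenius guarantees that $\rho(\Gamma)$ is itself an eigenvalue of $\Gamma$ and is real and nonnegative. Pulling this back through the shift, there exists a real $\lambda^\star\in\sigma(\Metz)$ with $\lambda^\star=\rho(\Gamma)-1\ge -1$. The key step is to check that $\lambda^\star$ in fact attains $\alpha(\Metz)$. I would do this by a direct estimate: for any $\lambda=a+ib\in\sigma(\Metz)$, the inequality $|1+\lambda|\le\rho(\Gamma)=1+\lambda^\star$ gives $(1+a)^2+b^2\le(1+\lambda^\star)^2$. If $1+a\ge 0$, taking square roots yields $a\le\lambda^\star$; if instead $1+a<0$, then $a<-1\le\lambda^\star$ directly. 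Either way $\mathrm{Re}(\lambda)\le\lambda^\star$, so $\alpha(\Metz)=\lambda^\star$.

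Combining the two observations, $\rho(\Gamma)=1+\alpha(\Metz)$, so $\Gamma$ is Schur (i.e.\ $\rho(\Gamma)<1$) iff $\alpha(\Metz)<0$, which is exactly the Hurwitz property of $\Metz$. The only non-routine step is the sign case analysis just above; everything else is Perron--Frobenius and the spectral-shift identity. An alternative route, avoiding the case analysis, is to invoke Lemma~\ref{lem:metzhur} to produce $\nu>0$ with $\Metz\nu<\Zero_M$ and read $\Gamma\nu<\nu$ componentwise, then apply the standard nonnegative-matrix fact $\rho(\Gamma)=\inf\{\mu>0:\exists\,v>0,\,\Gamma v\le\mu v\}$ to conclude $\rho(\Gamma)<1$; the converse is obtained by running the same argument with the Perron eigenvector of $\Gamma$ to produce a strictly negative direction for $\Metz$. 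Either path is short once the Perron--Frobenius machinery is in place.
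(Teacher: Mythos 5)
Your main argument is correct and complete. The paper itself does not prove this lemma: it simply defers to Theorem~13 of the cited monograph on positive systems, which is essentially the statement that the spectral abscissa of a Metzler matrix is attained at a real (``Frobenius'') eigenvalue. What you have done is supply a self-contained proof of exactly that fact via the spectral shift $\sigma(\Gamma)=\{1+\lambda:\lambda\in\sigma(\Metz)\}$ and Perron--Frobenius applied to the nonnegative matrix $\Gamma$, arriving at the identity $\rho(\Gamma)=1+\alpha(\Metz)$; the sign case analysis ($1+a\ge 0$ versus $1+a<0$) is handled correctly, and the equivalence then falls out. So your route is the standard proof of the result the paper cites, which buys the reader independence from the external reference at the cost of a few lines. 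One small remark: the direction ``$\Gamma$ Schur $\Rightarrow$ $\Metz$ Hurwitz'' needs no Perron--Frobenius at all, since $|1+\lambda|<1$ already forces $\mathrm{Re}(\lambda)<0$; only the converse uses nonnegativity of $\Gamma$. Relatedly, in your sketched alternative route the converse step ``run the same argument with the Perron eigenvector of $\Gamma$'' is not quite safe as stated, because for a reducible nonnegative $\Gamma$ the Perron eigenvector need not be strictly positive, so it does not directly certify $\Metz\nu<\Zero_M$ for some $\nu\in\Rset_+^M$ with all entries positive; this does not affect your main proof, which stands on its own.
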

                    The proof of Lemma~\ref{lem:metzschur} easily follows from Theorem 13 in \cite{Farina2000}.\\
                    Inequalities~\eqref{eq:pseudoinequalities} are equivalent to $\Metz\nu<\Zero_M$ where $\nu=\One_M$. Then, from Lemma \ref{lem:metzhur}, $\Metz$ is Hurwitz. From Lemma \ref{lem:metzschur}, \eqref{eq:pseudoinequalities} implies that matrix $\Gamma= \Metz+\eye M$ is Schur.\\
                    For dynamics \eqref{eq:erroridyntube}, we have
                    \begin{equation}
                      \label{eq:lagruzero}
                      \subss e i (t)=\bar A_{ii}^t\subss e i(0)+\sum_{k=0}^{t-1}\bar A_{ii}^k\sum_{j\in\NN_i}\bar A_{ij}\subss e j(t-k-1)
                    \end{equation}
                    In view of \eqref{eq:lagruzero} we can write
                    \begin{equation*}
                      \label{eq:lagruzeronorm}
                      \oneblock{
                        \norme{\HH_i\subss e i (t)}{\infty}
                                                            &\leq\norme{\HH_i\bar A_{ii}^t\HH_i^\flat}{\infty}\norme{\HH_i\subss e i(0)}{\infty}+\\ &+\sum_{j\in\mathcal{N}_i}\gamma_{ij}\max_{\substack{k\leq t}}\norme{\HH_j \subss e j(k)}{\infty}.
                      }
                    \end{equation*}
                    where $\gamma_{ij}$ are the entries of $\Gamma$. Denoting $\subss{\tilde e} i=\HH_i \subss e i$, we can collectively define $\tilde{\mbf e}=\tilde{\mbf\HH}\mbf e$, where $\tilde{\mbf\HH}=\diag(\HH_1,\dots,\HH_M)$. From the definition of sets $\Eset_i$, we have rank$(\tilde{\mbf\HH})=n$. We define the system
                    \begin{align}
                      \label{eq:exp_sys}
                      \mbf{\tilde e}^+=\tilde{\mbf\bar A}\tilde{\mbf e}
                    \end{align}
                    where $\tilde{\mbf \bar A}=\tilde{\mbf\HH}\mbf{\bar A}\tilde{\mbf\HH}^{\flat}$. In order to analyze the stability of the origin of \eqref{eq:exp_sys}, we use the small gain theorem for networks in \cite{Dashkovskiy2007}. In view of Corollary 16 in \cite{Dashkovskiy2007}, the overall system \eqref{eq:exp_sys} is asymptotically stable if the gain matrix $\Gamma$ is Schur and, as shown above, this property is implied by \eqref{eq:pseudoinequalities}. Moreover, system~\eqref{eq:exp_sys} is an expansion of the original system (see Chapter 3.4 in \cite{Lunze1992}). In view of the inclusion principle \cite{Stankovic2004}, the asymptotic stability of \eqref{eq:exp_sys} implies the asymptotic stability of the original system.

               \subsubsection{Proof of (II)}
                    \label{sec:proof1part2}
                    First note that, for $i\in\MM$, since $\Eset_i$ is a zonotope, $\norme{h_{i,\tau}^T\Xi_i}{\infty}=1$ for all $\tau\in1:\bar\tau_i$ and therefore $\norme{\HH_i\Xi_i}{\infty}=1$. This implies that $\norme{h_{i,\tau}^T\bar A_{ii}^k\bar A_{ij}\Xi_j}{\infty} \leq\norme{h_{i,\tau}^T\bar A_{ii}^k\bar A_{ij}\HH_{j}^{\flat}}{\infty}\norme{\HH_{j}\Xi_j}{\infty}=\norme{h_{i,\tau}^T\bar A_{ii}^k\bar A_{ij}\HH_{j}^{\flat}}{\infty}\leq\norme{\HH_{i}\bar A_{ii}^k\bar A_{ij}\HH_{j}^{\flat}}{\infty}$.

                    Therefore, from~\eqref{eq:betapseudo}, for all $\tau\in 1:\bar{\tau}_i$ it holds
                    \begin{equation}
                      \label{eq:errorcondition}
                      \sum_{k=0}^\infty\sum_{j\in\NN_i}\norme{h_{i,\tau}^T\bar A_{ii}^k\bar A_{ij}\Xi_j}{\infty}\leq\sum_{k=0}^\infty\sum_{j\in\NN_i}\norme{\HH_{i}\bar A_{ii}^k\bar A_{ij}\HH_{j}^{\flat}}{\infty}<1
                    \end{equation}
                    The next aim is to prove that there exists an RPI $\Sset_i\subseteq\Eset_i$ for the dynamics \eqref{eq:erroridyntube}, in particular we define $\Sset_i$ as an outer approximation of the mRPI $\underline\Sset_i$ and we prove that the outer approximation always exists.\\
                    The mRPI for \eqref{eq:erroridyntube} is given by \cite{Rakovic2005a}
                    \begin{equation}
                      \label{eq:mRPI}
                      \underline\Sset_i=\bigoplus_{k=0}^\infty \bar A_{ii}^k\bigoplus_{j\in\NN_i}\bar A_{ij}\Eset_j.
                    \end{equation}
                    From \cite{Rakovic2005a}, for given $\epsilon_i>0$ there exist $\alpha_i\in\Rset$ and $s_i\in\Nset_+$ such that the set
                    \begin{equation}
                      \label{eq:approxmRPI}
                      \Sset_i(\epsilon_i)=(1-\epsilon_i)^{-1}\bigoplus_{k=0}^{s_i-1} \bar A_{ii}^k\bigoplus_{j\in\NN_i}\bar A_{ij}\Eset_j
                    \end{equation}
                    is an $\epsilon_i-$outer approximation of the mRPI $\underline\Sset_i$.\\
                    Using arguments from Section 3 of \cite{Kolmanovsky1998}, we can then guarantee that $\Sset_i(\epsilon_i)\subseteq\Eset_i$. In fact for all $\tau\in 1:\bar{\tau}_i$
                    \begin{equation}
                      \label{eq:supSeps}
                      \sup_{\substack{\subss{s}{i}\in\Sset_i(\epsilon_i)}}~h_{i,\tau}^T\subss{s}{i}\leq 1.
                    \end{equation}
                    Using \eqref{eq:mRPI}, the inequalities \eqref{eq:supSeps} are verified if
                    \begin{equation}
                      \label{eq:supmSeps}
                      \sup_{\substack{\{\subss{e}{j}(k)\in \Eset_j\}_{j\in\NN_i}^{k=0,\ldots,\infty}\\\sigma_i\in\ball{\epsilon_i}(0)}}~z^x_{i,\tau}(\{\subss{e}{j}(k)\}_{j\in\NN_i}^{k=0,\ldots,\infty})+\norme{h_{i,\tau}^T\sigma_i}{\infty}\leq 1
                    \end{equation}
                    where $z^x_{i,\tau}(\cdot)=h_{i,\tau}^T\sum_{k=0}^\infty \bar A_{ii}^k\sum_{j\in\NN_i}\bar A_{ij}\subss{e}{j}(k)$.\\
                    Since $\norme{h_{i,r}^T\sigma_i}{\infty}\leq\norme{h_{i,r}^T}{\infty}\epsilon_i$, conditions \eqref{eq:supmSeps} are satisfied if
                    \begin{equation}
                      \label{eq:supmSwB}
                      \sup_{\substack{\{\subss{e}{j}(k)\in \Sset_j\}_{j\in\NN_i}^{k=0,\ldots,\infty}}}~~z^x_{i,\tau}(\{\subss{e}{j}(k)\}_{j\in\NN_i}^{k=0,\ldots,\infty})\leq 1-\norme{h_{i,\tau}^T}{\infty}\epsilon_i.
                    \end{equation}
                    Using \eqref{eq:setspolyE}, we can rewrite \eqref{eq:supmSwB} as
                    \begin{equation}
                      \label{eq:supmSwB2}
                      \sup_{\substack{\{\norme{d_j(k)}{\infty}\leq 1\}_{j\in\NN_i}^{k=0,\ldots,\infty}}}~~z^d_{i,\tau}(\{d_j(k)\}_{j\in\NN_i}^{k=0,\ldots,\infty})\leq 1-\norme{h_{i,r}^T}{\infty}\epsilon_i
                    \end{equation}
                    where $z^d_{i,\tau}(\cdot)=h_{i,\tau}^T(\sum_{k=0}^\infty \bar A_{ii}^k\sum_{j\in\NN_i}\bar A_{ij}\Xi_jd_j(k))$.\\
                    The inequalities \eqref{eq:supmSwB2} are satisfied if
                    \begin{equation}
                      \label{eq:supnorm}
                      \sum_{k=0}^\infty\sum_{j\in\NN_i}\norme{h_{i,\tau}^T\bar A_{ii}^k\bar A_{ij}\Xi_j}{\infty}\leq 1-\norme{h_{i,\tau}^T}{\infty}\epsilon_i
                    \end{equation}
                    for all $\tau\in1:\bar\tau_i$.\\
                    In view of~\eqref{eq:errorcondition}, there exists a sufficiently small $\epsilon_i>0$ satisfying~\eqref{eq:supnorm}. Hence we proved that $\forall i\in\MM$ there exists an RPI $\Sset_i\subseteq\Eset_i$ for dynamics \eqref{eq:erroridyntube}. Moreover if we define $\Sset=\prod_{i\in\MM}\Sset_i$, the set $\Sset$ is an invariant set for system \eqref{eq:errordyn} equipped with constraints $\Eset$ and $\Wset=\{0\}$.

          \subsection{Proof of Proposition \ref{prop:maindist}}
               \label{sec:proofmainDist}
               In the following we use similar arguments of Proof of Proposition \ref{prop:main} (see Section \ref{sec:proof1part2}) to prove that there exists an RPI $\Sset_i\subseteq\Eset_i$ for the dynamics \eqref{eq:erroridyntubedist}, in particular we define $\Sset_i$ as an outer approximation of the mRPI $\underline\Sset_i$ and we prove that the outer approximation always exists.\\
               The mRPI for \eqref{eq:erroridyntubedist} is given by \cite{Rakovic2005a}
               \begin{equation}
                 \label{eq:mRPIdist}
                 \underline\Sset_i=\bigoplus_{k=0}^\infty \bar A_{ii}^k \left(\bigoplus_{j\in\NN_i}\bar A_{ij}\Eset_j \oplus D_i\Wset_i\right) = \bigoplus_{k=0}^\infty \bar A_{ii}^k\tilde\Vset_i.
               \end{equation}
               From \cite{Rakovic2005a}, for given $\epsilon_i>0$ there exist $\alpha_i\in\Rset$ and $s_i\in\Nset_+$ such that the set
               \begin{equation}
                 \label{eq:approxmRPIdist}
                 \Sset_i(\epsilon_i)=(1-\epsilon_i)^{-1}\bigoplus_{k=0}^{s_i-1} \bar A_{ii}^k\tilde\Vset_i
               \end{equation}
               is an $\epsilon_i-$outer approximation of the mRPI $\underline\Sset_i$.\\
               Using arguments from Section 3 of \cite{Kolmanovsky1998}, we can then guarantee that $\Sset_i(\epsilon_i)\subseteq\Eset_i$. In fact for all $\tau\in 1:\bar{\tau}_i$
                    \begin{equation}
                      \label{eq:supSepsdist}
                      \sup_{\substack{\subss{s}{i}\in\Sset_i(\epsilon_i)}}~h_{i,\tau}^T\subss{s}{i}\leq 1.
                    \end{equation}
                    Using \eqref{eq:mRPIdist}, the inequalities \eqref{eq:supSepsdist} are verified if
                    \begin{equation}
                      \label{eq:supmSepsdist}
                      \sup_{\substack{\sigma_i\in\ball{\epsilon_i}(0)\\\subss {\tilde v} i\in\Vset_i}}~z^x_{i,\tau}(\{\subss{\tilde v}{i}(k)\}^{k=0,\ldots,\infty})+\norme{h_{i,\tau}^T\sigma_i}{\infty}\leq 1
                    \end{equation}
                    where $z^x_{i,\tau}(\cdot)=h_{i,\tau}^T\sum_{k=0}^\infty \bar A_{ii}^k\subss{\tilde v} i$.\\
                    Since $\norme{h_{i,r}^T\sigma_i}{\infty}\leq\norme{h_{i,r}^T}{\infty}\epsilon_i$, conditions \eqref{eq:supmSeps} are satisfied if
                    \begin{equation}
                      \label{eq:supmSwBdist}
                      \sup_{\substack{\subss{\tilde v} i\in\tilde\Vset_i}}~~z^x_{i,\tau}(\{\subss{\tilde v}{i}(k)\}^{k=0,\ldots,\infty})\leq 1-\norme{h_{i,\tau}^T}{\infty}\epsilon_i.
                    \end{equation}
                    Using \eqref{eq:setspolyE} and \eqref{eq:setspolyW}, we can rewrite \eqref{eq:supmSwBdist} as
                    \begin{equation}
                      \label{eq:supmSwB2dist}
                      \sup_{\substack{\{\norme{\tilde d_i(k)}{\infty}\leq 1\}^{k=0,\ldots,\infty} }}~~z^d_{i,\tau}(\{\tilde d_i(k)\}^{k=0,\ldots,\infty})\leq 1-\norme{h_{i,r}^T}{\infty}\epsilon_i
                    \end{equation}
                    where $z^d_{i,\tau}(\cdot)=h_{i,\tau}^T(\sum_{k=0}^\infty \bar A_{ii}^k\Psi_i\tilde d_i(k))$.\\
                    The inequalities \eqref{eq:supmSwB2dist} are satisfied if
                    \begin{equation}
                      \label{eq:supnormdist}
                      \sum_{k=0}^\infty\norme{h_{i,\tau}^T\bar A_{ii}^k\Psi_i}{\infty}\leq 1-\norme{h_{i,\tau}^T}{\infty}\epsilon_i
                    \end{equation}
                    for all $\tau\in1:\bar\tau_i$.\\
                    We proved that $\forall i\in\MM$ there exists an RPI $\Sset_i\subseteq\Eset_i$ for dynamics \eqref{eq:erroridyntubedist}. Moreover if we define $\Sset=\prod_{i\in\MM}\Sset_i$, the set $\Sset$ is an RPI invariant set for system \eqref{eq:errordyn} equipped with constraints $\Eset$ and $\Wset\neq\{0\}$.

          \subsection{Notes on the optimization problem \eqref{eq:Lijopt}}
               \label{sec:Lijoptnotes}
               In order to fulfill condition \eqref{eq:betapseudo}, we need to guarantee at least that
               $$
               \bar A_{ij}\Eset_j\subseteq\Eset_i
               $$
               hence
               $$
               \HH_i \bar A_{ij}\subss e j\leq 1, \forall \subss e j\in\Eset_j.
               $$
               In order to minimize the effect of coupling terms $\bar
               A_{ij}$, from \eqref{eq:setspolyE} we can solve the following optimization problem.
               \begin{equation}
                 \label{eq:minmaxAij1}
                 \eta_{ij} = \min_{L_{ij}}\max_{\substack{\subss e j=\Xi_jd_j\\\norme{d_j}{\infty}\leq 1}}\norme{\HH_i\bar A_{ij}\subss e j}{p}.
               \end{equation}
               where $p=1$ or $p=F$. Using arguments similar to the
               ones adopted in the proof of Proposition \ref{prop:main}, from \eqref{eq:minmaxAij1} we obtain
               \begin{equation}
                 \label{eq:minmaxAij2}
                 \begin{aligned}
                   \eta_{ij} &\leq \min_{L_{ij}}\max_{\norme{d_j}{\infty}\leq 1}\norme{\HH_i\bar A_{ij}\Xi_jd_j}{p}\\
                               &\leq \min_{L_{ij}}\max_{\norme{d_j}{\infty}\leq 1}\norme{\HH_i\bar A_{ij}\HH_j^\flat}{p}\norme{\HH_j\Xi_jd_j}{p}\\
                               &\leq \min_{L_{ij}}\max_{\norme{d_j}{\infty}\leq 1}\norme{\HH_i\bar A_{ij}\HH_j^\flat}{p}\norme{\HH_j\Xi_j}{p}\norme{d_j}{p}
                 \end{aligned}
               \end{equation}
               Irrespectively of $p$, there exist constants $c_{1,p}>0$ and $c_{2,p}>0$ such that
               $$
               \norme{\HH_j\Xi_j}{p}\leq c_{1,p}\norme{\HH_j\Xi_j}{\infty}=c_{1,p}
               $$
               $$
               \max_{\norme{d_j}{\infty}\leq 1}\norme{d_j}{p}\leq\max_{\norme{d_j}{\infty}\leq 1} c_{2,p}\norme{d_j}{\infty}=c_{2,p}
               $$
               Therefore, we can conclude that
               \begin{equation*}
                 \eta_{ij}\leq c_{1,p}c_{2,p}\min_{L_{ij}}\norme{\HH_i\bar A_{ij}\HH_j^\flat}{p}
               \end{equation*}
               and this motivates the optimization problem \eqref{eq:Lijopt}.

     \bibliographystyle{IEEEtran}
     \bibliography{PnP_DeDi_Observer}


\end{document}